\DeclareMathOperator{\Error}{Error}
\begin{document}
\doi{}
\issn{}  \issnp{}
\jvol{00} \jnum{00} \jyear{2015} 
\def\jobtag{}
\publisher{Unpublished}
\jname{}

\markboth{Fabien {Le Floc'h}}{Draft}

\title{More stochastic expansions for the pricing of vanilla options with cash dividends}
\author{Fabien {Le Floc'h}\thanks{{\em{Correspondence Address}}: 	Delft Institute of Applied Mathematics, TU Delft, Delft, The Netherlands. Email: \texttt{f.l.y.lefloch@tudelft.nl} \vspace{6pt}}}
\affil{Delft Institute of Applied Mathematics, TU Delft, Delft, The Netherlands.}
\date{\today}
\received{v1.1 released August 2019}

\maketitle
\newcommand{\sgn}{\mathop{\mathrm{sgn}}}
\begin{abstract}
	There is no exact closed form formula for pricing of European options with discrete cash dividends under the model where the underlying asset price follows a piecewise lognormal process with jumps at dividend ex-dates. This paper presents alternative expansions based on the technique of Etore and Gobet, leading to more robust first, second and third-order expansions across the range of strikes and the range of dividend dates.
\begin{keywords}Discrete dividends, cash dividends, stochastic expansion, option, pricing, Black-Scholes, finance\end{keywords}
\end{abstract}

\section{Introduction}
The Black-Scholes-Merton framework supposes a continuous dividend yield. In practice however, dividends are better modelled as fixed cash amounts, especially for short maturities (the first two or three years) where the uncertainty related to the dividend amount is low. For very short maturities (a quarter), the dividend amount is known exactly. For longer maturities, proportional dividends are more appropriate. From the arbitrage-free hypothesis and ignoring transaction costs, a discrete dividend implies that the stock jumps from the dividend amount at the dividend ex-date. 

In order to stay within the Black-Scholes-Merton framework, practitioners often compute the equivalent dividend yield, that is the dividend yield which leads to the same forward price as with pure discrete dividends for a given maturity. One issue however, is that the Black-Scholes-Merton model is then inconsistent around a dividend date. Indeed, around an ex-dividend date $t_i$, the option price must obey the following continuity relation \citep{bos2002finessing}
\begin{equation}\label{eqn:price_continuity}
V_{\textsf{call}}(S_0,K,t_{i}^-) = V_{\textsf{call}}(S_0, K-\delta_i,t_{i}^+)\,,
\end{equation}
for a cash dividend amount $\delta_i$, where $V_{\textsf{call}}(S_0,K,t)$ is the price of a European call option of strike $K$ and maturity $t$ and initial underlying price $S_0$.

As an example, let us consider the parameters $S_0=100, K=100, \delta_i = 1, t_{i} = 0.5$, volatility $\sigma=0.30$, interest rate $r = 0$, and apply the Black-Scholes formula using the constant dividend yield $\mu_T=r - \frac{1}{T}\ln\frac{S_0-\delta_i}{S_0}$, which preserves the forward price to maturity. We obtain
\begin{align*}
V_{\textsf{call}}(100,100,0.4999) = 8.446&\textmd{,}\quad V_{\textsf{call}}(100, 99,0.5001) = 8.363\textmd{.}
\end{align*}
The discontinuity is more obvious in the plot of the option price against the time to maturity (Figure \ref{fig:price_continuity01}).
\begin{figure}[h]
	\centering{\includegraphics[width=.48\textwidth]{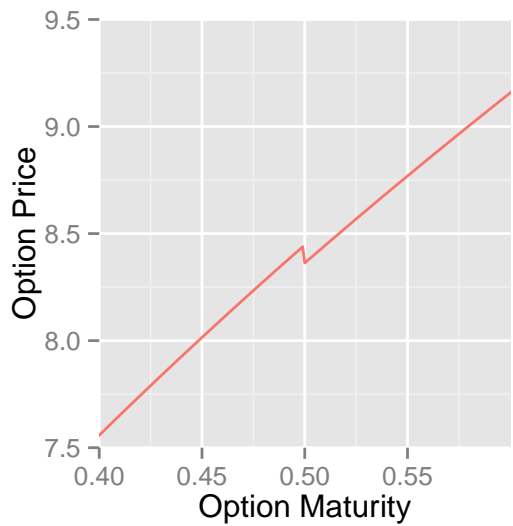}}
	\caption{Discontinuity of the option price under the forward model at a dividend date. $S_0=100, K=100, \delta_i = 1, t_{i} = 0.5, \sigma=30\%, r = q = 0$.}
	\label{fig:price_continuity01}
\end{figure}
As a consequence, under the Black-Scholes model, the volatility $\sigma$ must jump just after the dividend date in order for the prices to obey the continuity relationship. In our example, the call prices are equals if the volatility jumps to $\sigma(t_{i}^+) = 30.3\%$. Thus, in practice, the discontinuity may be solved by using a specific interpolation in time of the implied volatility surface around the dividend dates. It will also impact directly any implementation of the Dupire local volatility model, as the local volatility is computed from the time-derivative of the call option prices \citep{dupire1994pricing}. The impact on path-dependent options is more subtle, and there, the models can not be reconciled. In particular, American options are very sensitive to the dividend modelling: a dividend yield, a proportional dividend and a cash dividend lead to significantly different exercise boundaries, even when those are calibrated to give the same European option price \citep{meyer2002numerical,gottsche2011early,vellekoop2011integral}.
Other issues of the equivalent dividend yield approach are: an exploding forward dividend yield which will impact numerical accuracy when solving the Black-Scholes PDE, wrong delta and gamma hedging greeks.

In order to better take into account the discontinuity at dividend dates, and obtain a more meaningful hedging for European and American options, a natural model is to represent the equity or equity index as a piecewise lognormal process, which jumps from the dividend amount at each exercise date. There is however no exact analytical formula for this problem. Standard numerical techniques like Monte-Carlo, or better, finite difference or quadrature methods can be applied without much difficulty. The pricing of European options requires however fast evaluation methods in practice. While a quadrature might be fine for a single dividend \citep{haug2003back}, it is much slower for many dividends. As a consequence many have tried to develop corrections to the Black-Scholes formula to take into account the discrete dividends. 
 \citet{bos2002finessing} proposed a simple adjustment to the spot and the strike. More recently more precise approximations were discovered \citep{sahel2011matching,zhang2011fast}. Of particularly interest is the stochastic Taylor expansion technique around a shifted lognormal model of 
  \citet{etore2012stochastic} leading to first-, second- and third-order (with regards to the dividend amount) formulas. Here, we will apply their Taylor expansion technique, but on a different proxy, aiming for more accurate approximations of European options prices under the piecewise lognormal model.

This paper is organized as follows. Section \ref{sec:notation} introduces the notation that will be used across the paper. In sections \ref{sec:expansion_forward}, \ref{sec:expansion_lehman}, we apply the Etore-Gobet expansion technique using respectively the forward model and the Lehman model of  
\citet{bos2002finessing} as a proxy to obtain European option prices expansions of up to third-order. Finally, in section \ref{sec:expansion_numerical}, we compare the accuracy and performance against of our new expansions against other approximations for the price of European options under the piecewise-lognormal model.

\section{Notation}\label{sec:notation}
We borrow the notation of \citet{etore2012stochastic} as we reuse their technique. The asset price with proportional dividends $y_i \in [0,1)$ and cash dividends $\delta_i \geq 0$ is represented by the process $S^{(y,\delta)}$. The piecewise lognormal dynamics of the stock price under the risk-neutral measure $\mathbb{Q}$ between two dividend dates is:
\begin{equation}
dS_t^{(y,\delta)} = \sigma_t S_t^{(y,\delta)} dW_t + (r_t - q_t) S_t^{(y,\delta)}dt
\end{equation}
where $r_t$ is the risk free deterministic interest rate and $q_t$ a deterministic repo spread. And it jumps at the dividend dates $t_i$:
\begin{equation}
S_{t_i}^{(y,\delta)} = S_{t_i^-}^{(y,\delta)} - (\delta_i + y_i S_{t_i^-}^{(y,\delta)})=S_{t_i^-}^{(y,\delta)}(1-y_i)-\delta_i \textmd{.}
\end{equation}
The fictitious asset without dividends $S$ follows
\begin{equation}
dS_t = \sigma_t S_t dW_t + (r_t-q_t)S_t dt
\end{equation}
with initial value $S_0 = S_0^{(y,\delta)}$.
The zero discount factor $B_t$, the discount factor $D_t$ and the lognormal martingale $M_t$ are defined as
\begin{equation}
B_t = e^{-\int_0^t r_s ds} \textmd{, } D_t = e^{-\int_0^t (r_s-q_s)ds} \textmd{, } M_t = e^{\int_0^t \sigma_s dW_s - \frac{1}{2}\int_0^t \sigma_s^2 ds}\textmd{.}
\end{equation}
Thus, we have  $S_t = S_0 \frac{M_t}{D_t}$.
We set $\pi_{i,n} = \prod_{j=i+1}^{n} (1-y_j)$ and will make use of the simplified notation 
\begin{equation}
\hat{\delta}_i = \delta_i \pi_{i,n} \frac{D_{t_i}}{D_T}
\end{equation} 
where $T$ will typically be the option maturity.

Finally we recall the Lemma 1.1 of \citet{etore2012stochastic}:
\begin{lemma}\label{lemma_st}
	We have $ S_t^{(y,\delta)} = \pi_{0,n} S_t - \sum_{i=1}^n \delta_i \pi_{i,n} \frac{S_t}{S_{t_i}}$.
\end{lemma}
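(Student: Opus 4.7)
The plan is to track the ratio $X_t := S_t^{(y,\delta)}/S_t$ between and across dividend dates, then identify $X_t$ with the coefficient appearing in the claim.

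First I would observe that on any interval $(t_{i-1}, t_i)$ (with $t_0 = 0$), both $S^{(y,\delta)}$ and $S$ solve the same geometric Brownian SDE driven by the same $W$, with the same drift $r_t - q_t$ and the same volatility $\sigma_t$. Hence by It\^o's formula applied to $X_t$, the ratio $X_t$ is pathwise constant on each such interval. At $t=0$ we have $X_0 = 1$ by the initial condition $S_0 = S_0^{(y,\delta)}$.

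Next I would derive the jump recursion. Using the given dividend rule $S^{(y,\delta)}_{t_i} = (1-y_i) S^{(y,\delta)}_{t_i^-} - \delta_i$ and the continuity of $S$ at $t_i$, dividing by $S_{t_i}$ yields
\begin{equation*}
X_{t_i} = (1-y_i)\, X_{t_i^-} - \frac{\delta_i}{S_{t_i}}\,.
\end{equation*}
I would then prove by induction on $k$ that, for $t \in [t_k, t_{k+1})$,
\begin{equation*}
X_t = \pi_{0,k} - \sum_{i=1}^{k} \pi_{i,k}\,\frac{\delta_i}{S_{t_i}}\,.
\end{equation*}
The base case $k=0$ gives $X_t = 1 = \pi_{0,0}$, and the induction step combines the jump relation with the identity $(1-y_{k+1})\pi_{i,k} = \pi_{i,k+1}$, which folds the new factor into every product and appends the new term $\delta_{k+1}/S_{t_{k+1}}$ with coefficient $\pi_{k+1,k+1} = 1$. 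Specializing to $t \in [t_n, T]$ and multiplying through by $S_t$ produces exactly $\pi_{0,n} S_t - \sum_{i=1}^n \delta_i \pi_{i,n} S_t/S_{t_i}$.

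The only real subtlety is the bookkeeping of the telescoping products in the induction: one must be careful that the factor $(1-y_{k+1})$ acts uniformly on all previously accumulated coefficients, turning each $\pi_{i,k}$ into $\pi_{i,k+1}$, while the freshly subtracted dividend $\delta_{k+1}/S_{t_{k+1}}$ enters with an empty product. No analytic difficulty beyond this; the conclusion is then a direct restatement.
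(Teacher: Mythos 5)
Your proof is correct. Note that the paper itself gives no proof of this statement: it simply recalls it as Lemma~1.1 of Etore and Gobet (2012). Your argument---showing the ratio $S_t^{(y,\delta)}/S_t$ is constant between ex-dates because both processes solve the same linear SDE, then propagating the jump relation $X_{t_i}=(1-y_i)X_{t_i^-}-\delta_i/S_{t_i}$ by induction using $(1-y_{k+1})\pi_{i,k}=\pi_{i,k+1}$---is the standard induction on the number of dividend dates used in the cited source, and you correctly flag that the identity as written is the one valid for $t\geq t_n$.
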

\section{Expansion around the forward}\label{sec:expansion_forward}
\citet{etore2012stochastic} consider an expansion around a shifted lognormal spot model where the asset price $ \bar{S}_t^{(y,\delta)}$ follows
\begin{equation}
\bar{S}_t^{(y,\delta)} = S_t - \mathbb{E}\left[\sum_{i=1}^n \delta_i \pi_{i,n} \frac{S_t}{S_{t_i}}\right] = \pi_{0,n}S_t - \sum_{i=1}^n \hat{\delta}_i
\end{equation}

We will instead do an expansion on the forward model :
\begin{equation}
F_t =  \left(\pi_{0,n}S_0 -\sum_{i=1}^n \delta_i \pi_{i,n}D_{t_i}\right)\frac{M_t}{D_t}
\end{equation}
We recall that the degree $k$ of smoothness of $h$ is defined as $H_k$: the function $h$ is $(k-1)$-times continuously differentiable and the $(k-1)$-th derivative is almost everywhere differentiable. Moreover, the derivatives are polynomially bounded: for some positive constant C and p, one has $|h(x)| + \sum_{j=1}^n |\partial_x^j h(x)| \leq C(1+|x|^p)$ for any $x \in \mathbb{R}$.

\subsection{First-order expansion around the forward}
\begin{theorem}\label{theorem_h_1}
For a smooth function $h$ satisfying $H_2$, we have
	\begin{align}
	\mathbb{E}\left[B_T h(S_T^{(y,\delta)} -K)\right] =&  	\mathbb{E}\left[B_T h(F_T -K)\right]\\
	&+ \sum_{i=1}^n	\hat{\delta}_i \left(\partial_K \mathbb{E}\left[B_T h(F_T e^{\int_{t_i}^T \sigma_s^2 ds}-K)\right]\right.\\
	&\left. - \partial_K \mathbb{E}\left[B_T  h(F_T e^{\int_{0}^T \sigma_s^2 ds}-K)\right]\right)\\
	&+ \Error_2(h)
	\end{align}
	where $|\Error_2(h)| \leq c(1+S_0^p)\sup_i\left(\delta_i \sigma \sqrt{t_i}\right)^2$.
\end{theorem}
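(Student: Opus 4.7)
The strategy is: (i) derive an exact identity for $S_T^{(y,\delta)} - F_T$; (ii) perform a first-order Taylor expansion of $h$ about $F_T - K$; (iii) use a Girsanov-type change of measure to rewrite the linear correction as strike-derivatives of expectations; and (iv) bound the quadratic remainder.

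\textbf{Main identity and Taylor expansion.} Combining Lemma~\ref{lemma_st} with $S_t = S_0 M_t / D_t$, the $\pi_{0,n} S_T$ contribution in $S_T^{(y,\delta)}$ cancels the $\pi_{0,n} S_0 M_T / D_T$ part of $F_T$; using $S_T/S_{t_i} = (D_{t_i}/D_T)(M_T/M_{t_i})$ and the definition of $\hat\delta_i$, one obtains
\begin{equation*}
S_T^{(y,\delta)} - F_T = -\sum_{i=1}^n \hat\delta_i\left(\frac{M_T}{M_{t_i}} - M_T\right).
\end{equation*}
A first-order Taylor expansion of $h$ then gives $h(S_T^{(y,\delta)}-K) = h(F_T-K) + h'(F_T-K)(S_T^{(y,\delta)}-F_T) + R$, with integral remainder $R = (S_T^{(y,\delta)}-F_T)^2 \int_0^1 (1-\theta)\, h''\bigl(F_T - K + \theta(S_T^{(y,\delta)}-F_T)\bigr)\,d\theta$.

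\textbf{Evaluating the linear term.} Both $M_T$ and $M_T/M_{t_i}$ are positive $\mathbb{Q}$-martingales with unit mean. Using $M_T/M_{t_i}$ as a Radon--Nikodym density, Girsanov's theorem shifts the Brownian drift by $\sigma_s$ on $[t_i,T]$, so that under the new measure $M_T$ has the same law as $e^{V_i}M_T$ under $\mathbb{Q}$, where $V_i := \int_{t_i}^T \sigma_s^2\,ds$. Since $F_T$ is a deterministic multiple of $M_T$, this yields
\begin{equation*}
\mathbb{E}\bigl[B_T h'(F_T-K)\, M_T/M_{t_i}\bigr] = \mathbb{E}\bigl[B_T h'(F_T e^{V_i}-K)\bigr],
\end{equation*}
and similarly $\mathbb{E}[B_T h'(F_T-K)\, M_T] = \mathbb{E}[B_T h'(F_T e^{V_0}-K)]$ with $V_0 = \int_0^T \sigma_s^2\,ds$. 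Writing $h'(x-K) = -\partial_K h(x-K)$ and pulling $\partial_K$ outside the expectation (justified by polynomial growth under $H_2$ and dominated convergence) produces precisely the two $\partial_K$ terms in the statement.

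\textbf{Error bound (main obstacle).} To control $\mathbb{E}[B_T|R|]$, I would apply H\"older's inequality together with the polynomial bound $|h''(x)| \leq C(1+|x|^p)$ and standard $L^q$-moment estimates for the lognormal-type quantities $F_T$ and $S_T^{(y,\delta)}$, which account for the $(1+S_0^p)$ factor. It then suffices to estimate the $L^{2q}$-norm of $S_T^{(y,\delta)}-F_T$. Independence of $M_T/M_{t_i}$ from $\mathcal{F}_{t_i}$ and a direct lognormal computation give $\mathbb{E}[(M_T/M_{t_i}-M_T)^2] = (e^{\int_0^{t_i}\sigma_s^2 ds}-1)\,e^{V_i} = O(\sigma^2 t_i)$, and an analogous calculation yields $\mathbb{E}[(M_T/M_{t_i}-M_T)(M_T/M_{t_j}-M_T)] = e^{V_{i\wedge j}}(e^{\int_0^{t_{i\wedge j}}\sigma_s^2 ds}-1) = O(\sigma^2(t_i\wedge t_j))$ for the cross-terms. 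Bounding each such term by $(\hat\delta_i\sigma\sqrt{t_i})(\hat\delta_j\sigma\sqrt{t_j}) \leq \sup_i(\hat\delta_i\sigma\sqrt{t_i})^2$ and absorbing the number of terms and the factors $\pi_{i,n}$, $D_{t_i}/D_T$ into the constant yields the claimed bound. The delicate point -- showing that the supremum is over $t_i$ rather than $T$ and thus that the error is genuinely quadratic in the dividend amounts -- is where the bulk of the technical effort lies, mirroring the corresponding estimate in Etore--Gobet.
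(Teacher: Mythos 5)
Your proposal is correct and follows essentially the same route as the paper: the identity $S_T^{(y,\delta)}=F_T+\sum_i\hat\delta_i\bigl(M_T-\tfrac{M_T}{M_{t_i}}\bigr)$, a first-order Taylor expansion of $h$, the change of measure induced by $M_T/M_\tau$ (the paper merely swaps the order, writing $h'=-\partial_K h$ and interchanging $\partial_K$ with the expectation before changing measure), and a quadratic remainder bound via lognormal moment estimates, which the paper obtains by citing Lemma 3.3 of Etore--Gobet, i.e.\ $\|\tfrac{M_T}{M_{t_i}}-M_T\|\leq\bar\sigma\sqrt{t_i}$ in $L^q$ rather than computing the second moments and cross-terms explicitly as you do. Your only loose end is that the H\"older step needs the $L^{2q}$-norm of $S_T^{(y,\delta)}-F_T$, not just $L^2$, but the same Gaussian computation (or the cited lemma) gives $\|\tfrac{M_T}{M_{t_i}}-M_T\|_{L^q}\leq c_q\,\bar\sigma\sqrt{t_i}$ for all $q$, so the argument closes as in the paper.
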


\begin{proof}
	Let us first rewrite Lemma \ref{lemma_st} in terms of the forward $F_t$:
	\begin{align}
S_T^{(y,\delta)} &= \pi_{0,n}S_0 \frac{M_T}{D_T} - \sum_{i=1}^n \pi_{i,n}\delta_i \frac{M_T}{D_T}\frac{D_{t_i}}{M_{t_i}}\nonumber\\
&= F_t +  \sum_{i=1}^n \pi_{i,n}\delta_i \frac{M_T}{D_T} \left(D_{t_i}-\frac{D_{t_i}}{M_{t_i}}\right)\nonumber\\
&= F_t + \sum_{i=1}^n \hat{\delta}_i  \left(M_{T}-\frac{M_{T}}{M_{t_i}}\right)\textmd{.}
\end{align}
We then apply a simple Taylor expansion of order-1 on $h$:
\begin{align}
\mathbb{E}\left[B_T h(S_T^{(y,\delta)} -K)\right] =& \mathbb{E}\left[B_T h(F_T -K)\right]\nonumber\\
&+\sum_{i=1}^n \hat{\delta}_i \mathbb{E}\left[ B_T h'(F_T-K)  \left(M_{T}-\frac{M_{T}}{M_{t_i}}\right)\right]+ \Error_2(h)
\end{align}
where $|\Error_2(h)| \leq (1+S_0^p)\sup_i\left(\delta_i \|\frac{M_T}{M_{t_i}}-M_T \| \right)^2$. 
Reusing Lemma 3.3 of \citet{etore2012stochastic}, we have
\begin{equation}
\| \frac{M_T}{M_{t_i}}-M_T \| \leq \bar{\sigma} \sqrt{t_i} \textmd{.}
\end{equation}

The assumptions on $h$ allow us to interchange derivation and expectation:
\begin{align}
\mathbb{E}\left[ B_T h'(F_T-K)  \left(M_{T}-\frac{M_{T}}{M_{t_i}}\right)\right] =& -\partial_K\mathbb{E}\left[ B_T h(F_T-K)  M_{T}\right]+\partial_K\mathbb{E}\left[ B_T h(F_T-K)  \frac{M_{T}}{M_{t_i}}\right]
\end{align}
We now proceed to a change of measure defined by $\frac{M_T}{M_\tau}$ where $\tau=0$ and $\tau=t_i$ for the two expectations of the right on side. Under the new measure $\mathbb{Q}^\tau$, $\bar{W}_t = W_t - \int_0^t \sigma_s 1_{\tau \leq s \leq T} ds$ is a Brownian motion. $F_T$ under $\mathbb{Q}^\tau$ has the same law as $F_T e^{\int_\tau^T \sigma^2 ds}$ under $\mathbb{Q}$. Thus,
\begin{align}\label{change_1}
	\mathbb{E}\left[ B_T h(F_T-K)  M_{T}\right] &= \mathbb{E}\left[ B_T h(F_T e^{\int_{0}^T \sigma_s^2 ds}-K)\right]\\
 \mathbb{E}\left[ B_T h(F_T-K)  \frac{M_{T}}{M_{t_i}}\right] &= \mathbb{E}\left[ B_T hF_T e^{\int_{t_i}^T \sigma_s^2 ds} -K)  \right]
\end{align}\textmd{.}

	\end{proof}	
	
\subsection{Second-order expansion around the forward}
\begin{theorem}\label{theorem_h_2}
For a smooth function $h$ satisfying $H_3$, we have
\begin{align}
	&\mathbb{E}\left[B_T h(S_T^{(y,\delta)} -K)\right]\nonumber\\
	=&  	\mathbb{E}\left[B_T h(F_T -K)\right]\nonumber\\
	&+ \sum_{i=1}^n	\hat{\delta}_i \left(\partial_K \mathbb{E}\left[B_T h(F_T e^{\int_{t_i}^T \sigma_s^2 ds}-K)\right]\right.\left. - \partial_K \mathbb{E}\left[B_T  h(F_T e^{\int_{0}^T \sigma_s^2 ds}-K)\right]\right)\nonumber\\
	&+  \frac{1}{2} \sum_{1\leq i,j \leq n}	\hat{\delta}_i \hat{\delta}_j \partial^2_K \mathbb{E}\left[B_T h(F_T e^{\int_{t_i}^T \sigma_s^2 ds+\int_{t_j}^T \sigma_s^2 ds}-K)\right]e^{\int_{\max(t_i,t_j)}^T \sigma_s^2 ds}\nonumber\\
	&- \left(\sum_{j=1}^n	\hat{\delta}_j\right)  \sum_{i=1}^n \hat{\delta}_i \partial^2_K \mathbb{E}\left[B_T h(F_T e^{\int_{0}^T \sigma_s^2 ds+\int_{t_i}^T \sigma_s^2 ds}-K)\right]e^{\int_{t_i}^T \sigma_s^2 ds}\nonumber\\
	&+  \frac{1}{2}\left(\sum_{j=1}^n	\hat{\delta}_j\right)^2 \partial^2_K \mathbb{E}\left[B_T h(F_T e^{2\int_{0}^T \sigma_s^2 ds}-K)\right]e^{\int_{0}^T \sigma_s^2 ds}\nonumber\\
	&+ \Error_3(h)
\end{align}
where $|\Error_3(h)| \leq c(1+S_0^p)\sup_i\left(\delta_i \sigma \sqrt{t_i}\right)^3$.
\end{theorem}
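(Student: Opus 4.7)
The plan is to extend Theorem~\ref{theorem_h_1} by one Taylor order. Starting from the identity
\begin{equation*}
S_T^{(y,\delta)} = F_T + \sum_{i=1}^n \hat{\delta}_i \left(M_T - \frac{M_T}{M_{t_i}}\right)
\end{equation*}
derived in the proof of Theorem~\ref{theorem_h_1}, I would Taylor-expand $h(S_T^{(y,\delta)} - K)$ about $F_T - K$ to second order. The zeroth- and first-order contributions are copied verbatim from Theorem~\ref{theorem_h_1}; the new quadratic term is
\begin{equation*}
\frac{1}{2} h''(F_T - K) \sum_{1 \leq i, j \leq n} \hat{\delta}_i \hat{\delta}_j \left(M_T - \frac{M_T}{M_{t_i}}\right)\left(M_T - \frac{M_T}{M_{t_j}}\right).
\end{equation*}
The Taylor remainder is bounded, just as in Theorem~\ref{theorem_h_1}, by $(1+S_0^p)\sup_i(\delta_i\,\|M_T - M_T/M_{t_i}\|)^3$, which by Lemma~3.3 of \citet{etore2012stochastic} gives the stated $\Error_3(h)$.

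Next I would expand the product $(M_T - M_T/M_{t_i})(M_T - M_T/M_{t_j})$ into the four monomials $M_T^2$, $M_T^2/M_{t_i}$, $M_T^2/M_{t_j}$, $M_T^2/(M_{t_i}M_{t_j})$, interchange $h''$ with $\partial_K^2$ under the expectation (justified by the polynomial growth of $h$), and regroup. The two middle monomials are exchanged by $i \leftrightarrow j$, so they combine into the single cross term $-(\sum_j \hat{\delta}_j)\sum_i \hat{\delta}_i (\cdot)$, the pure $M_T^2$ piece collapses to $\frac{1}{2}(\sum_j \hat{\delta}_j)^2(\cdot)$, and the $M_T^2/(M_{t_i}M_{t_j})$ piece retains its full double sum with coefficient $\frac{1}{2}\hat{\delta}_i\hat{\delta}_j$.

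The main obstacle is the Girsanov step, because these three weights are no longer martingales of unit expectation. The key is to factor each weight as a genuine exponential martingale times a deterministic exponential correction. For $M_T^2$, the stochastic integrand is $2\sigma_s$ on $[0,T]$, giving a correction $e^{\int_0^T \sigma_s^2 ds}$ and, under the new measure, a shift $F_T \to F_T e^{2\int_0^T \sigma_s^2 ds}$. For $M_T^2/M_{t_i}$ with integrand $\sigma_s(1 + 1_{[t_i,T]}(s))$, squaring produces an extra $e^{\int_{t_i}^T \sigma_s^2 ds}$ and shifts $F_T \to F_T e^{\int_0^T \sigma_s^2 ds + \int_{t_i}^T \sigma_s^2 ds}$. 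For $M_T^2/(M_{t_i}M_{t_j})$ with integrand $\sigma_s(1_{[t_i,T]}+1_{[t_j,T]})$, the overlap of the two indicators produces the correction $e^{\int_{\max(t_i,t_j)}^T \sigma_s^2 ds}$ and the shift $F_T \to F_T e^{\int_{t_i}^T \sigma_s^2 ds + \int_{t_j}^T \sigma_s^2 ds}$. Collecting these three pieces with the signs and combinatorial prefactors identified in the previous paragraph reproduces the four displayed lines of the theorem, completing the argument.
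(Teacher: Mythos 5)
Your proposal is correct and follows essentially the same route as the paper: a second-order Taylor expansion of $h$ around $F_T-K$, expansion of the squared sum into the monomials $M_T^2$, $M_T^2/M_{t_i}$ and $M_T^2/(M_{t_i}M_{t_j})$, a change of measure for each weight with the same deterministic correction factors $e^{\int_0^T\sigma_s^2 ds}$, $e^{\int_{t_i}^T\sigma_s^2 ds}$, $e^{\int_{\max(t_i,t_j)}^T\sigma_s^2 ds}$, and the interchange of $\partial_K^2$ with the expectation. Your explicit verification of the Girsanov normalisations is slightly more detailed than the paper's, but the argument is the same.
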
	
\begin{proof}
We start with a Taylor expansion of order-2 on $h$:
\begin{align}
	\mathbb{E}\left[B_T h(S_T^{(y,\delta)} -K)\right] =& \mathbb{E}\left[B_T h(F_T -K)\right]\nonumber\\
	&+\sum_{i=1}^n \hat{\delta}_i \mathbb{E}\left[ B_T h'(F_T-K)  \left(M_{T}-\frac{M_{T}}{M_{t_i}}\right)\right]\nonumber\\
	&+\frac{1}{2} \mathbb{E}\left[ B_T h''(F_T-K)  \left(\sum_{i=1}^n \hat{\delta}_i\left(M_{T}-\frac{M_{T}}{M_{t_i}}\right)\right)^2\right]\nonumber\\
	&+\Error_3(h)
\end{align}
where $|\Error_3(h)| \leq (1+S_0^p)\sup_i\left(\delta_i \|\frac{M_T}{M_{t_i}}-M_T \| \right)^3$. 

The first two terms stem from the first-order expansion of Theorem \ref{theorem_h_1}. The remaining term can be written as:
\begin{align}
	\mathbb{E}\left[ B_T h''(F_T-K)  \left(\sum_{i=1}^n \hat{\delta}_i\left(M_{T}-\frac{M_{T}}{M_{t_i}}\right)\right)^2\right] =& 	\sum_{1\leq i,j \leq n}	\hat{\delta}_i \hat{\delta}_j \mathbb{E}\left[ B_T h''(F_T-K) \frac{M_{T}}{M_{t_i}}\frac{M_{T}}{M_{t_j}}\right]\nonumber\\
	&-2\left(\sum_{j=1}^n \hat{\delta}_j\right) \sum_{i=1}^n \hat{\delta}_i \mathbb{E}\left[ B_T h''(F_T-K) \frac{M_{T}^2}{M_{t_i}}\right]\nonumber\\
	&+\left(\sum_{i=1}^n \hat{\delta}_i\right)^2 \mathbb{E}\left[ B_T h''(F_T-K) M_{T}^2\right]\textmd{.}
\end{align}

Similarly to equality \ref{change_1}, we proceed to a change of measure $\mathbb{Q}^{t_i,t_j}$ defined by $e^{\int_0^T\sigma_s (1_{t_i \leq s \leq T} + 1_{t_i \leq s \leq T})dW_s - \frac{1}{2}\int_{0}^T \sigma_s^2 (1_{t_i \leq s \leq T} + 1_{t_i \leq s \leq T})^2 ds }$ knowing that
\begin{equation}
\frac{M_{T}}{M_{t_i}}\frac{M_{T}}{M_{t_j}} = e^{\int_0^T\sigma_s (1_{t_i \leq s \leq T} + 1_{t_i \leq s \leq T})dW_s - \frac{1}{2}\int_{0}^T \sigma_s^2 (1_{t_i \leq s \leq T} + 1_{t_i \leq s \leq T})^2 ds} e^{\int_{\max(t_i,t_j)}^T \sigma_s^2 ds} \textmd{.}
\end{equation}
This means that $F_T$ under $\mathbb{Q}^{t_i,t_j}$ has the same law as $F_T e^{\int_{t_i}^T \sigma_s^2 ds+\int_{t_j}^T \sigma_s^2 ds}$ under $\mathbb{Q}$. Thus we have
\begin{align}
 \mathbb{E}\left[ B_T h''(F_T-K) \frac{M_{T}}{M_{t_i}}\frac{M_{T}}{M_{t_j}}\right]&=\mathbb{E}\left[ B_T h''(F_T e^{\int_{t_i}^T \sigma_s^2 ds+\int_{t_j}^T \sigma_s^2 ds} -K) \right]e^{\int_{\max(t_i,t_j)}^T \sigma_s^2 ds}\\
\mathbb{E}\left[ B_T h''(F_T-K) \frac{M_{T}^2}{M_{t_i}}\right] &=\mathbb{E}\left[ B_T h''(F_T e^{\int_{t_i}^T \sigma_s^2 ds+\int_{0}^T \sigma_s^2 ds} -K) \right]e^{\int_{t_i}^T \sigma_s^2 ds}\\
\mathbb{E}\left[ B_T h''(F_T-K) M_{T}^2\right]&= \mathbb{E}\left[ B_T h''(F_T e^{2\int_{0}^T \sigma_s^2 ds}-K) \right]e^{\int_{0}^T \sigma_s^2 ds}\textmd{.}
 \end{align}
Finally, the assumptions on $h$ allow us to interchange the derivative and the expectation, proving the theorem.
\end{proof}
Applying Theorem \ref{theorem_h_2} to $h(x)=  |x|^+$ would result in an approximation for the call option prices of second-order. Note that however this specific choice of $h$ does not obey $H_3$. Fortunately, \cite{etore2012stochastic} have proven that the results stay valid for such a function.

Under the forward model, the option price of strike $k$, forward $f = \mathbb{E}[F_T]$ and maturity $T$ is obtained by the Black formula \citep{black1976pricing}:
\begin{align}
	V_B(f,k) = \eta B_T\left[x \Phi(\eta d_1(f,k)) - k \Phi(\eta d_2(f,k)) \right]
\end{align}
where $\Phi$ is the cumulative normal distribution function, $\eta=1$ for a Call, $\eta=-1$ for a Put and 
\begin{align}
d_1(f,k) = \frac{1}{\sqrt{\int_0^T \sigma^2_s ds}}\ln\frac{f}{k}+\frac{1}{2}\sqrt{\int_0^T \sigma^2_s ds}\textmd{ , } d_2(f,k) = d_1(f,k) - \sqrt{\int_0^T \sigma^2_s ds}\textmd{.}
\end{align}
The derivatives towards the strike are:
\begin{align}
\partial_k V_B(f,k) &= -\eta B_T\Phi(\eta d_2(f,k)) \textmd{ ,}\\
\partial^2_k V_B(f,k) &= \frac{B_T}{k\sqrt{\int_0^T \sigma^2_s ds}}\phi(d_2(f,k))
\end{align}
where $\phi$ is the normal density function. The following relations will be useful to simplify the expansions of the vanilla option price:
\begin{align}
	&\partial_f V_B(f,k) = \eta B_T \Phi(\eta d_1(f,k)) \textmd{ ,}\\
	&\partial^2_f V_B(f,k) =\eta \frac{B_T}{f\sqrt{\int_0^T \sigma^2_s ds}} \phi(d_1(f,k)) \textmd{ ,}\\
	&d_2\left(f e^{\int_0^T \sigma^2_s ds},k\right) = d_1 \textmd{ ,}\\
	&\phi(d_2(f,k)) = \frac{f}{k}\phi(d_1(f,k))	\textmd{ ,}\\
	&\frac{1}{k} \phi\left( d_2\left(f e^{2\int_0^T \sigma^2_s ds - \int_0^t \sigma^2_s ds}, k\right) \right) e^{\int_t^T \sigma^2_s ds} = \frac{1}{f}\phi\left(d_1(f,k) - \frac{\int_0^t \sigma^2_s ds}{\sqrt{\int_0^T \sigma^2_s ds}}\right) \label{eqn:phid2s_d1} \textmd{ .}
\end{align}
Let $v_t = \sqrt{\int_0^t \sigma^2_s ds}$, applying Theorem \ref{theorem_h_2} to the vanilla option payoff leads to:
\begin{align}
	\mathbb{E}\left[B_T |\eta S_T^{(y,\delta)} - \eta K|^+\right]
	=& \eta B_T \left[f \Phi(\eta d_1) - K \Phi(\eta d_2)\right] \nonumber\\
	&-\eta B_T \sum_{i=1}^n	\hat{\delta}_i  \left( \Phi(\eta d_1 -\eta \frac{v_{t_i}^2}{v_T}) - \Phi(\eta d_1) \right) \nonumber\\
	&+  \frac{1}{2}B_T \sum_{1\leq i,j \leq n}	\hat{\delta}_i \hat{\delta}_j \frac{\phi(d_1 + v_T-\frac{v_{t_i}^2+v_{t_j}^2}{v_T})}{k v_T} e^{v_T^2 - v_{\max(t_i,t_j)}^2}\nonumber\\
	&- B_T \left(\sum_{j=1}^n	\hat{\delta}_j\right)  \sum_{i=1}^n \hat{\delta}_i \frac{\phi(d_1 - \frac{v_{t_i}^2}{v_T})}{f v_T}\nonumber\\
	&+  \frac{1}{2} B_T\left(\sum_{j=1}^n	\hat{\delta}_j\right)^2 \frac{\phi(d_1)}{f v_T}\nonumber\\
	&+ \Error_3(h)
\end{align}
where $|\Error_3(h)| \leq c(1+S_0^p)\sup_i\left(\delta_i \sigma \sqrt{t_i}\right)^3$, with $d_1 = d_1(f,K), d_2 = d_2(f,K)$ and $f=\pi_{0,n}\frac{S_0}{D_T} -\sum_{i=1}^n \delta_i \pi_{i,n}\frac{D_{t_i}}{D_T}$.

Note that it could also be expressed purely with derivatives of the vanilla option towards the forward (i.e. through the usual option Delta and Gamma greeks):
\begin{align}
	\mathbb{E}\left[B_T |\eta S_T^{(y,\delta)} - \eta K|^+\right] = &V_{B}\left(f,K\right)\nonumber\\
	&-\sum_{i=1}^{n} \hat{\alpha_i}\left[ \frac{\partial V_B}{\partial f}\left(f e^{-\int_{0}^{t_i}\sigma^2(s)ds},K\right)-\frac{\partial V_B}{\partial f}\left(f,K\right) 
	\right] \nonumber\\
	&+\frac{1}{2}\left[ \sum_{1 \leq i,j \leq n} \hat{\alpha_i} \hat{\alpha_j} e^{\int_{0}^{\min(t_i,t_j)}\sigma^2(s)ds} \frac{\partial^2 V_B}{\partial f^2}\left(f e^{-\int_{0}^{t_i}\sigma^2(s)ds-\int_{0}^{t_j}\sigma^2(s)ds},K\right) \right.\nonumber\\
	&\left. -2  \left(\sum_{j}^{n} \hat{\alpha_j}\right) \sum_{i=1}^{n} \hat{\alpha_i}\frac{\partial^2 V_B}{\partial f^2}\left(f e^{-\int_{0}^{t_i}\sigma^2(s)ds},K\right) \right.\nonumber\\
	&\left. + \left(\sum_{j}^{n} \hat{\alpha_j}\right)^2 \frac{\partial^2 V_B}{\partial f^2}\left(f ,K\right)\right]
\end{align}
where we used the equation (\ref{eqn:phid2s_d1}) and the identity $i+j-\max(i,j)=\min(i,j)$.

This is not too surprising. We performed the Taylor expansion on the strike variable as it is simpler to derive, but there is an equivalent formulation in terms of the forward variable.

\subsection{Third-order expansion around the forward}
\begin{theorem}\label{theorem_h_3}
	For a smooth function $h$ satisfying $H_4$, we have
	\begin{align}
		&\mathbb{E}\left[B_T h(S_T^{(y,\delta)} -K)\right]\nonumber\\
		=&  	\mathbb{E}\left[B_T h(F_T -K)\right]\nonumber\\
		&+ \sum_{i=1}^n	\hat{\delta}_i \left(\partial_K \mathbb{E}\left[B_T h(F_T e^{\int_{t_i}^T \sigma_s^2 ds}-K)\right]\right.\left. - \partial_K \mathbb{E}\left[B_T  h(F_T e^{\int_{0}^T \sigma_s^2 ds}-K)\right]\right)\nonumber\\
		&+  \frac{1}{2} \sum_{1\leq i,j \leq n}	\hat{\delta}_i \hat{\delta}_j \partial^2_K \mathbb{E}\left[B_T h(F_T e^{\int_{t_i}^T \sigma_s^2 ds+\int_{t_j}^T \sigma_s^2 ds}-K)\right]e^{\int_{\max(t_i,t_j)}^T \sigma_s^2 ds}\nonumber\\
		&- \left(\sum_{j=1}^n	\hat{\delta}_j\right)  \sum_{i=1}^n \hat{\delta}_i \partial^2_K \mathbb{E}\left[B_T h(F_T e^{\int_{0}^T \sigma_s^2 ds+\int_{t_i}^T \sigma_s^2 ds}-K)\right]e^{\int_{t_i}^T \sigma_s^2 ds}\nonumber\\
		&+  \frac{1}{2}\left(\sum_{j=1}^n	\hat{\delta}_j\right)^2 \partial^2_K \mathbb{E}\left[B_T h(F_T e^{2\int_{0}^T \sigma_s^2 ds}-K)\right]e^{\int_{0}^T \sigma_s^2 ds}\nonumber\\
		&+\frac{1}{6}\sum_{1\leq i,j,l \leq n}	\hat{\delta}_i \hat{\delta}_j \hat{\delta}_l \partial^3_K \mathbb{E}\left[B_T h(F_T e^{\int_{t_i}^T \sigma_s^2 ds+\int_{t_j}^T \sigma_s^2 ds+\int_{t_l}^T \sigma_s^2 ds}-K)\right]e^{\int_{\max(t_i,t_j)}^T \sigma_s^2 ds+\int_{\max(t_i,t_l)}^T \sigma_s^2 ds+\int_{\max(t_j,t_l)}^T \sigma_s^2 ds}\nonumber\\
		&-  \frac{1}{2}\left(\sum_{j=1}^n	\hat{\delta}_j\right) \sum_{1\leq i,j \leq n}	\hat{\delta}_i \hat{\delta}_j \partial^3_K \mathbb{E}\left[B_T h(F_T e^{\int_{t_i}^T \sigma_s^2 ds+\int_{t_j}^T \sigma_s^2 ds +\int_{0}^T \sigma_s^2}-K)\right]e^{\int_{\max(t_i,t_j)}^T \sigma_s^2 ds+\int_{t_i}^T \sigma_s^2 ds+\int_{t_j}^T \sigma_s^2 ds}\nonumber\\
		&+ \frac{1}{2}\left(\sum_{j=1}^n	\hat{\delta}_j\right)^2  \sum_{i=1}^n \hat{\delta}_i \partial^3_K \mathbb{E}\left[B_T h(F_T e^{2\int_{0}^T \sigma_s^2 ds+\int_{t_i}^T \sigma_s^2 ds}-K)\right]e^{2\int_{t_i}^T \sigma_s^2 ds+\int_{0}^T \sigma_s^2 ds}\nonumber\\
		&-\frac{1}{6}\left(\sum_{j=1}^n	\hat{\delta}_j\right)^3 \partial^3_K \mathbb{E}\left[B_T h(F_T e^{3\int_{0}^T \sigma_s^2 ds}-K)\right]e^{3\int_{0}^T \sigma_s^2 ds}\nonumber\\
		&+ \Error_4(h)
	\end{align}
	where $|\Error_4(h)| \leq c(1+S_0^p)\sup_i\left(\delta_i \sigma \sqrt{t_i}\right)^4$.
\end{theorem}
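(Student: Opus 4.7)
The plan is to mirror the proof of Theorem \ref{theorem_h_2} and push the stochastic Taylor expansion of $h$ one order further. First I would rewrite $S_T^{(y,\delta)} = F_T + \sum_{i=1}^n \hat{\delta}_i(M_T - M_T/M_{t_i})$ as in the proof of Theorem \ref{theorem_h_1}, then Taylor expand $h$ around $F_T - K$ to order three with a Lagrange remainder. The polynomial growth contained in $H_4$ together with the bound $\|M_T/M_{t_i} - M_T\| \leq \bar{\sigma}\sqrt{t_i}$ from Lemma 3.3 of \citet{etore2012stochastic} yields the error estimate $|\Error_4(h)| \leq c(1+S_0^p)\sup_i(\delta_i \sigma \sqrt{t_i})^4$ exactly as in the two earlier theorems. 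The first- and second-order pieces of the expansion are inherited verbatim from Theorem \ref{theorem_h_2}.

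The genuinely new contribution is the cubic term
\[
\frac{1}{6}\mathbb{E}\!\left[B_T h'''(F_T-K)\Bigl(\sum_{i=1}^n \hat{\delta}_i(M_T - M_T/M_{t_i})\Bigr)^{\!3}\right].
\]
Writing $A = M_T \sum_i \hat{\delta}_i$ and $B = \sum_i \hat{\delta}_i M_T/M_{t_i}$, I would expand $(A-B)^3 = A^3 - 3A^2B + 3AB^2 - B^3$ and distribute the sums to obtain four classes of triple expectations, classified by the number of $M_T/M_{t_k}$ factors: $M_T^3$ with weight $(\sum_j \hat{\delta}_j)^3$, $M_T^3/M_{t_i}$ with weight $(\sum_j \hat{\delta}_j)^2 \hat{\delta}_i$, $M_T^3/(M_{t_i} M_{t_j})$ with weight $(\sum_l \hat{\delta}_l)\hat{\delta}_i \hat{\delta}_j$, and $M_T^3/(M_{t_i} M_{t_j} M_{t_l})$ with weight $\hat{\delta}_i \hat{\delta}_j \hat{\delta}_l$.

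Each of the four expectations is then handled by a Girsanov change of measure that generalises the one in the proof of Theorem \ref{theorem_h_2}. For a triple of times $(a,b,c)$ drawn from $\{0, t_i, t_j, t_l\}$ I would use the identity
\[
\frac{M_T}{M_a}\frac{M_T}{M_b}\frac{M_T}{M_c} = \frac{d\mathbb{Q}^{a,b,c}}{d\mathbb{Q}}\,e^{\int_{\max(a,b)}^T \sigma_s^2 ds + \int_{\max(a,c)}^T \sigma_s^2 ds + \int_{\max(b,c)}^T \sigma_s^2 ds},
\]
where $\mathbb{Q}^{a,b,c}$ is obtained by Girsanov with drift $\sigma_s(\mathbf{1}_{[a,T]} + \mathbf{1}_{[b,T]} + \mathbf{1}_{[c,T]})$; this follows from expanding $(\sum_k \mathbf{1}_{[t_k,T]})^2$ and using $\mathbf{1}_{[t_k,T]} \mathbf{1}_{[t_l,T]} = \mathbf{1}_{[\max(t_k,t_l),T]}$. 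Under $\mathbb{Q}^{a,b,c}$ the forward $F_T$ has the same law as $F_T\, e^{\int_a^T \sigma_s^2 ds + \int_b^T \sigma_s^2 ds + \int_c^T \sigma_s^2 ds}$ under $\mathbb{Q}$, so each of the four cubic expectations becomes a $\mathbb{Q}$-expectation of $h'''$ evaluated at a shifted forward, times a deterministic prefactor. Finally the interchange identity $\mathbb{E}[B_T h'''(F_T e^{\xi} - K)] = -\partial_K^3 \mathbb{E}[B_T h(F_T e^{\xi} - K)]$, legitimate under $H_4$ by polynomial growth, converts each $h'''$ expectation into a third strike derivative.

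The main obstacle is combinatorial bookkeeping rather than any new analytic ingredient. The minus sign from $h''' \leftrightarrow -\partial_K^3 h$ combined with the alternating multinomial coefficients $+1,-3,+3,-1$ must reproduce the $+\frac{1}{6}, -\frac{1}{2}, +\frac{1}{2}, -\frac{1}{6}$ factors of the statement, and one must check that inserting $a=0$ into $(a,b,c)$, with $\max(0, t_k) = t_k$, collapses the three pairwise-maximum integrals into the specific combinations appearing in the four new lines of the theorem: for example, the $AB^2$ block with $(a,b,c) = (0, t_i, t_j)$ produces the prefactor $e^{\int_{\max(t_i,t_j)}^T \sigma_s^2 ds + \int_{t_i}^T \sigma_s^2 ds + \int_{t_j}^T \sigma_s^2 ds}$ and the shift $e^{\int_0^T \sigma_s^2 ds + \int_{t_i}^T \sigma_s^2 ds + \int_{t_j}^T \sigma_s^2 ds}$, matching exactly the fifth line of the statement, and the remaining three cases are checked analogously.
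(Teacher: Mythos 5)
Your proposal is correct and follows essentially the same route as the paper's own proof: a third-order Taylor expansion of $h$, the multinomial expansion of $\bigl(\sum_i \hat{\delta}_i(M_T - M_T/M_{t_i})\bigr)^3$ into the four blocks weighted by $+1,-3,+3,-1$, a Girsanov change of measure driven by $\sigma_s(\mathbf{1}_{[a,T]}+\mathbf{1}_{[b,T]}+\mathbf{1}_{[c,T]})$ producing the prefactor $e^{\int_{\max(a,b)}^T \sigma_s^2\,ds+\int_{\max(a,c)}^T \sigma_s^2\,ds+\int_{\max(b,c)}^T \sigma_s^2\,ds}$, and the interchange of $\partial_K^3$ with the expectation; your explicit tracking of the sign from $h'''=-\partial_K^3 h$ and the verification of the $(0,t_i,t_j)$ and $(0,0,t_i)$ specialisations reproduce exactly the paper's stated identity and coefficients. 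No gap.
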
	
\begin{proof}
The second-order terms follow directly from Theorem \ref{theorem_h_2}. The third-order terms come from the third-order Taylor expansion on $h$. The additional term from the Taylor expansion is:
\begin{align}
\frac{1}{6}\mathbb{E}\left[ B_T h'''(F_T-K)  \left(\sum_{i=1}^n \hat{\delta}_i\left(M_{T}-\frac{M_{T}}{M_{t_i}}\right)\right)^3\right]\text{.}
\end{align}
We can expand the cubic term as
\begin{align}
\left(\sum_{i=1}^n \hat{\delta}_i\left(M_{T}-\frac{M_{T}}{M_{t_i}}\right)\right)^3 
&=\left(\sum_{i=1}^n \hat{\delta}i \right)^3 M_T^3 - \left(\sum_{i=1}^n \hat{\delta}_i \frac{M_{T}}{M_{t_i}}\right)^3 \nonumber\\
& + 3 \left(\sum_{l=1}^n \hat{\delta}_l \right) \left(\sum_{1\leq i,j \leq n} \hat{\delta}_i \hat{\delta}_j \frac{M_{T}^3}{M_{t_i} M_{t_j}}\right) - 3 \left(\sum_{j=1}^n \hat{\delta}_j \right)^2 \left(\sum_{i=1}^n \hat{\delta}_i \frac{M_{T}^3}{M_{t_i}}\right) \text{.}
\end{align}
	We then compute the expectation of each term, using the linearity of the expectation and the identity
	\begin{align*}
	&\mathbb{E}\left[ B_T h'''(F_T-K) \frac{M_{T}^3}{M_{t_i} M_{t_j}M_{t_l}}\right] \\&=\partial_K^3 \mathbb{E}\left[ B_T h\left(F_T e^{\int_{t_i}^T \sigma^2_s ds +\int_{t_j}^T \sigma^2_s ds + \int_{t_l}^T \sigma^2_s ds}-K\right) \right]e^{\int_{\max(t_i,t_j)}^T \sigma^2_s ds + \int_{\max(t_i,t_l)}^T \sigma^2_s ds + \int_{\max(t_j,t_l)}^T \sigma^2_s ds}
	\end{align*}
	with $t_i = 0$ or $t_j=0$ or $t_i = T$ or $t_j=T$ depending on the expression to evaluate.
\end{proof}
\section{Expansion around the Lehman model}\label{sec:expansion_lehman}
The Lehman model consists in simple adjustment of the spot and the strike in the Black-Scholes formula. The spot is adjusted by the present value of near-term dividends, just as in the forward model, while the strike is adjusted by far-away dividends \citep{bos2002finessing}. Let the near and far parts be
\begin{align}
X^n_T &= \sum_i \frac{T-t_i}{T} \delta_i \pi_{i,n} \frac{D_{t_i}}{D_T}\textmd{ ,}\\
X^f_T &= \sum_i \frac{t_i}{T} \delta_i \pi_{i,n} \frac{D_{t_i}}{D_T}\textmd{ .}
\end{align}
In the Lehman model, the asset follows
\begin{equation}
\bar{S}_t = \left(\pi_{0,n} \frac{S_0}{D_t} - X^n_t \right) M_t - X^f_t\textmd{ .}
\end{equation}
The vanilla option price can be obtained through the Black formula $V_{B}(F, K)$ for a forward $F$, strike $K$, expiry $T$:
\begin{equation}
V(S_0,K) = V_{B}\left(\frac{S_0}{D_T} \pi_{0,n} - X^n_T, K + X^f_T\right)\textmd{ .}
\end{equation}

Let us define
\begin{align}
	\bar{F}_t &= \left(\pi_{0,n} \frac{S_0}{D_t} - X^n_t \right) M_t \textmd{ ,}\\
	\bar{f} &= \frac{S_0}{D_T} \pi_{0,n} - X^n_T \textmd{ ,}\\
	\bar{k} &=  K + X^f_T \textmd{ .}
\end{align}

\subsection{First-order expansion around the Lehman model}
\begin{theorem}\label{theorem_hl_1}
	For a smooth function $h$ satisfying $H_2$, we have
	\begin{align}
		&\mathbb{E}\left[B_T h(S_T^{(y,\delta)} -K)\right] \\
		=& 	\mathbb{E}\left[B_T h\left(\bar{F}_T - \bar{k} \right)\right]\nonumber\\
		&+ \sum_{i=1}^n \frac{T-t_i}{T}	\hat{\delta}_i \left(\partial_K \mathbb{E}\left[B_T h\left(\bar{F}_T e^{\int_{t_i}^T \sigma_s^2 ds}-\bar{k}\right)\right] - \partial_K \mathbb{E}\left[B_T  h\left(\bar{F}_T e^{\int_{0}^T \sigma_s^2 ds}-\bar{k}\right)\right]\right)\nonumber\\
			&+ \sum_{i=1}^n \frac{t_i}{T}	\hat{\delta}_i \left(\partial_K \mathbb{E}\left[B_T h\left(\bar{F}_T e^{\int_{t_i}^T \sigma_s^2 ds}-\bar{k}\right)\right] - \partial_K \mathbb{E}\left[B_T  h\left(\bar{F}_T-\bar{k}\right)\right]\right)		+ \Error_2(h)
	\end{align}
	where $|\Error_2(h)| \leq c(1+S_0^p)\sup_i\left( \left(\delta_i\sigma\frac{T-t_i}{T} \sqrt{t_i}\right)^2 + \left(  \delta_i\sigma\frac{t_i}{T}\sqrt{T-t_i}\right)^2\right)$.
\end{theorem}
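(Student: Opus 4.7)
The plan is to mirror the strategy of Theorem \ref{theorem_h_1}, but around the Lehman proxy $\bar{F}_T - \bar{k}$ instead of the forward proxy $F_T - K$. The first step is purely algebraic: starting from Lemma \ref{lemma_st}, write
\begin{equation*}
S_T^{(y,\delta)} = \pi_{0,n}S_0\frac{M_T}{D_T} - \sum_{i=1}^n \hat{\delta}_i \frac{M_T}{M_{t_i}},
\end{equation*}
and recognise $\pi_{0,n}S_0\frac{M_T}{D_T} = \bar{F}_T + X_T^n M_T$ together with $\bar{k} = K + X_T^f$. Splitting $\frac{M_T}{M_{t_i}}$ into the convex combination with weights $\frac{T-t_i}{T}$ and $\frac{t_i}{T}$ (which sum to one) gives the clean identity
\begin{equation*}
S_T^{(y,\delta)} - K = (\bar{F}_T - \bar{k}) + \sum_{i=1}^n \tfrac{T-t_i}{T}\,\hat{\delta}_i\!\left(M_T - \tfrac{M_T}{M_{t_i}}\right) + \sum_{i=1}^n \tfrac{t_i}{T}\,\hat{\delta}_i\!\left(1 - \tfrac{M_T}{M_{t_i}}\right).
\end{equation*}
This is the Lehman analogue of the decomposition used in Theorem \ref{theorem_h_1}, with the extra second sum arising because part of the dividend has been absorbed into the strike rather than the spot.

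Next, I apply a first-order Taylor expansion of $h$ around $\bar{F}_T - \bar{k}$, exactly as in the forward-model proof, with a Lagrange remainder of order two. Since $\bar{k}$ depends linearly on $K$ with coefficient one, $h'(\bar{F}_T - \bar{k}) = -\partial_K h(\bar{F}_T - \bar{k})$, so under the $H_2$ smoothness assumption I may push $\partial_K$ through the expectation. Each of the resulting expectations involves $M_T$, $M_T/M_{t_i}$, or $1$ inside the expectation, and I reduce them by exactly the same Girsanov change of measure used after equation (\ref{change_1}): the factor $M_T$ (resp.\ $M_T/M_{t_i}$) turns $\bar{F}_T$ into $\bar{F}_T e^{\int_0^T \sigma_s^2 ds}$ (resp.\ $\bar{F}_T e^{\int_{t_i}^T \sigma_s^2 ds}$) in law, while the term with multiplier $1$ stays as $\bar{F}_T$. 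Collecting the four resulting $\partial_K$-expressions reproduces the two parenthesised contributions in the theorem.

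For the error, I reuse the Lemma 3.3 type bound from \citet{etore2012stochastic}: the $L^p$ norm of $M_T - M_T/M_{t_i}$ is controlled by $\bar{\sigma}\sqrt{t_i}$, and a symmetric argument (applied on the interval $[t_i,T]$ rather than $[0,t_i]$) yields $\|1 - M_T/M_{t_i}\| \leq \bar{\sigma}\sqrt{T-t_i}$. Because the two perturbation sums are weighted by $\frac{T-t_i}{T}$ and $\frac{t_i}{T}$ respectively, the $(a+b)^2 \leq 2(a^2 + b^2)$ inequality applied to the Taylor remainder gives a bound of the form $c(1 + S_0^p)\sup_i\bigl((\delta_i \sigma \tfrac{T-t_i}{T}\sqrt{t_i})^2 + (\delta_i \sigma \tfrac{t_i}{T}\sqrt{T-t_i})^2\bigr)$, matching the statement.

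The main obstacle I anticipate is not the Taylor step, which is routine once the decomposition above is in hand, but rather the clean derivation of the estimate $\|1 - M_T/M_{t_i}\| \leq \bar{\sigma}\sqrt{T-t_i}$ and the justification that the resulting remainder is dominated by the mixed product $\frac{T-t_i}{T}\cdot\sqrt{t_i} + \frac{t_i}{T}\cdot\sqrt{T-t_i}$ in the form stated: these two weights are precisely what makes the Lehman proxy a better local approximant than the pure forward proxy near each dividend date, and the error bound quantifies this improvement.
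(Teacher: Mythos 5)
Your proposal is correct and follows essentially the same route as the paper: the identical decomposition of $S_T^{(y,\delta)}-K$ around $\bar{F}_T-\bar{k}$ into the two weighted perturbation sums, a first-order Taylor expansion, the same Girsanov change of measure for the $M_T$ and $M_T/M_{t_i}$ factors, and the same $L^2$ bounds $\bar{\sigma}\sqrt{t_i}$ and $\bar{\sigma}\sqrt{T-t_i}$ for the remainder. The only cosmetic difference is that the paper delegates the $\bigl(1-M_T/M_{t_i}\bigr)$ terms to Theorem 2.1 of \citet{etore2012stochastic} rather than rederiving them as you do.
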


\begin{proof}
	Let us first rewrite Lemma \ref{lemma_st} in terms of $\bar{S}_T$:
	\begin{align}
		S_T^{(y,\delta)} &= \pi_{0,n}S_0 \frac{M_T}{D_T} - \sum_{i=1}^n \pi_{i,n}\delta_i \frac{M_T}{D_T}\frac{D_{t_i}}{M_{t_i}}\nonumber\\
		&= \bar{S}_T + X^n_T M_T + X^f_T - \sum_{i=1}^n \frac{T-t_i+t_i}{T}\hat{\delta}_i  \frac{M_T}{M_{t_i}}\nonumber\\
		&= \bar{S}_T + \sum_{i=1}^n \frac{T-t_i}{T} \hat{\delta}_i \left( M_T-\frac{M_T}{M_{t_i}} \right) + \sum_{i=1}^n \frac{t_i}{T}\hat{\delta}_i \left(1- \frac{M_T}{M_{t_i}} \right) \textmd{.}
	\end{align}
	As for Theorem \ref{theorem_h_1}, we apply a simple Taylor expansion of order-1 on $h$:
	\begin{align}
		\mathbb{E}\left[B_T h(S_T^{(y,\delta)} -K)\right] =& \mathbb{E}\left[B_T h(\bar{F}_T -\bar{k})\right]\nonumber\\
		&+ \sum_{i=1}^n \frac{T-t_i}{T} \hat{\delta}_i \mathbb{E}\left[ B_T h'(\bar{F}_T-\bar{k})  \left(M_{T}-\frac{M_{T}}{M_{t_i}}\right)\right]\nonumber\\
		&+ \sum_{i=1}^n \frac{t_i}{T} \hat{\delta}_i \mathbb{E}\left[ B_T h'(\bar{F}_T-\bar{k})  \left(1-\frac{M_{T}}{M_{t_i}}\right)\right]+ \Error_2(h)
	\end{align}
	where $|\Error_2(h)| \leq (1+S_0^p)\sup_i\left(\delta_i\frac{T-t_i}{T} \|\frac{M_T}{M_{t_i}}-M_T \| + \delta_i\frac{t_i}{T} \|\frac{M_T}{M_{t_i}}-1 \| \right)^2$. 
	
The first part can be obtained through the same steps as in Theorem \ref{theorem_h_1} applied to $\delta^n_i = \frac{T-t_i}{T}\delta_i $ and the second part can be obtained through the same steps as in Theorem 2.1 of \citet{etore2012stochastic} applied to $\delta^f_i = \frac{t_i}{T}\delta_i $ .
	
	\end{proof}
\subsection{Second-order expansion around the Lehman model}
\begin{theorem}\label{theorem_hl_2}
	For a smooth function $h$ satisfying $H_3$, we have
	\begin{align}
		&\mathbb{E}\left[B_T h(S_T^{(y,\delta)} -K)\right] \\
		=& 	\mathbb{E}\left[B_T h\left(\bar{F}_T - \bar{k} \right)\right]\nonumber\\
		&+ \sum_{i=1}^n \frac{T-t_i}{T}	\hat{\delta}_i \left(\partial_K \mathbb{E}\left[B_T h\left(\bar{F}_T e^{\int_{t_i}^T \sigma_s^2 ds}-\bar{k}\right)\right] - \partial_K \mathbb{E}\left[B_T  h\left(\bar{F}_T e^{\int_{0}^T \sigma_s^2 ds}-\bar{k}\right)\right]\right)\nonumber\\
		&+ \sum_{i=1}^n \frac{t_i}{T}	\hat{\delta}_i \left(\partial_K \mathbb{E}\left[B_T h\left(\bar{F}_T e^{\int_{t_i}^T \sigma_s^2 ds}-\bar{k}\right)\right] - \partial_K \mathbb{E}\left[B_T  h\left(\bar{F}_T-\bar{k}\right)\right]\right)\nonumber\\
		&+  \frac{1}{2} \sum_{1\leq i,j \leq n}	\frac{T+t_j-t_i}{T}\hat{\delta}_i \hat{\delta}_j \partial^2_K \mathbb{E}\left[B_T h(\bar{F}_T e^{\int_{t_i}^T \sigma_s^2 ds+\int_{t_j}^T \sigma_s^2 ds}-\bar{k})\right]e^{\int_{\max(t_i,t_j)}^T \sigma_s^2 ds}\nonumber\\
		&- \left(\sum_{j=1}^n \frac{T-t_j}{T}	\hat{\delta}_j\right)  \sum_{i=1}^n \hat{\delta}_i \partial^2_K \mathbb{E}\left[B_T h(\bar{F}_T e^{\int_{0}^T \sigma_s^2 ds+\int_{t_i}^T \sigma_s^2 ds}-\bar{k})\right]e^{\int_{t_i}^T \sigma_s^2 ds}\nonumber\\
		&- \left(\sum_{j=1}^n \frac{t_j}{T}	\hat{\delta}_j\right)  \sum_{i=1}^n  \hat{\delta}_i \partial^2_K \mathbb{E}\left[B_T h(\bar{F}_T e^{\int_{t_i}^T \sigma_s^2 ds}-\bar{k})\right]\nonumber\\
		&+  \frac{1}{2}\left(\sum_{j=1}^n  \frac{T-t_j}{T}	\hat{\delta}_j\right)^2 \partial^2_K \mathbb{E}\left[B_T h(\bar{F}_T e^{2\int_{0}^T \sigma_s^2 ds}-\bar{k})\right]e^{\int_{0}^T \sigma_s^2 ds}\nonumber\\
		&+ \frac{1}{2}\left(\sum_{j=1}^n  \frac{t_j}{T}	\hat{\delta}_j\right)^2\partial^2_K \mathbb{E}\left[B_T h(\bar{F}_T-\bar{k})\right]\nonumber\\
		&+ \left(\sum_{j=1}^n  \frac{t_j}{T}	\hat{\delta}_j\right)\left(\sum_{j=1}^n  \frac{T-t_j}{T}	\hat{\delta}_j\right)\partial^2_K \mathbb{E}\left[B_T h(\bar{F}_T e^{\int_{0}^T \sigma_s^2 ds}-\bar{k})\right] + \Error_3(h)
	\end{align}
	where $|\Error_3(h)| \leq c(1+S_0^p)\sup_i\left( \left(\delta_i\sigma\frac{T-t_i}{T} \sqrt{t_i}\right)^3 + \left(  \delta_i\sigma\frac{t_i}{T}\sqrt{T-t_i}\right)^3\right)$.
\end{theorem}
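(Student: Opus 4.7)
The plan is to follow the same Taylor-expansion recipe used in Theorem \ref{theorem_h_2}, but to expand around the Lehman proxy $\bar F_T-\bar k$ instead of around $F_T-K$. First I would combine the decomposition derived in the proof of Theorem \ref{theorem_hl_1} with the identities $\bar S_T=\bar F_T-X^f_T$ and $\bar k=K+X^f_T$ to obtain
\begin{equation*}
S_T^{(y,\delta)}-K=(\bar F_T-\bar k)+\sum_{i=1}^n a_i A_i+\sum_{i=1}^n b_i B_i,
\end{equation*}
with $a_i:=\frac{T-t_i}{T}\hat\delta_i$, $b_i:=\frac{t_i}{T}\hat\delta_i$, $A_i:=M_T-M_T/M_{t_i}$ and $B_i:=1-M_T/M_{t_i}$. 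A second-order Taylor expansion of $h$ around $\bar F_T-\bar k$ then splits $\mathbb{E}[B_T h(S_T^{(y,\delta)}-K)]$ into a zero-th order term, a linear-in-$h'$ term already handled by Theorem \ref{theorem_hl_1}, a new quadratic-in-$h''$ term, and a cubic remainder.

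Next I would expand the square
\begin{equation*}
\Bigl(\sum_i a_i A_i+\sum_i b_i B_i\Bigr)^{\!2}=\Bigl(\sum_i a_i A_i\Bigr)^{\!2}+2\Bigl(\sum_i a_i A_i\Bigr)\Bigl(\sum_j b_j B_j\Bigr)+\Bigl(\sum_j b_j B_j\Bigr)^{\!2}
\end{equation*}
and, exactly as in Theorem \ref{theorem_h_2}, break each resulting cross-product $A_iA_j$, $A_iB_j$, $B_iB_j$ into its four $M$-monomials. Every resulting expectation of the form $\mathbb{E}[B_T h''(\bar F_T-\bar k)\,M_T^{c_0}M_{t_i}^{-c_1}M_{t_j}^{-c_2}]$ is then evaluated by the same Girsanov tilt as in equation \eqref{change_1}: under the tilted measure whose Radon--Nikodym derivative is the normalized $M$-monomial, $\bar F_T$ has the same law as $\bar F_T$ times a deterministic exponential, and the $H_3$ hypothesis on $h$ lets $\partial_K^2$ pass outside the expectation.

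The principal work is the bookkeeping of coefficients. For instance, the coefficient of $\partial_K^2\mathbb{E}[B_T h(\bar F_T e^{\int_{t_i}^T\sigma_s^2 ds+\int_{t_j}^T\sigma_s^2 ds}-\bar k)]\,e^{\int_{\max(t_i,t_j)}^T\sigma_s^2 ds}$ equals $\tfrac12(a_i a_j+2a_i b_j+b_i b_j)$, which algebraically simplifies to $\tfrac12\frac{T+t_j-t_i}{T}\hat\delta_i\hat\delta_j$; the $M_T^2/M_{t_i}$ and $M_T/M_{t_i}$ monomials collapse, via the identity $a_i+b_i=\hat\delta_i$, into the two mixed sums $-(\sum_j\frac{T-t_j}{T}\hat\delta_j)\sum_i\hat\delta_i(\cdots)$ and $-(\sum_j\frac{t_j}{T}\hat\delta_j)\sum_i\hat\delta_i(\cdots)$; and the $M_T^2$, $M_T$, and constant monomials reproduce the last three lines of the theorem.

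Finally, the cubic remainder is handled exactly as in Theorem \ref{theorem_hl_1}: the third-order Taylor remainder is controlled by the cube of $\sup_i(a_i\|A_i\|+b_i\|B_i\|)$, and by Lemma~3.3 of \citet{etore2012stochastic} applied once to the interval $[0,t_i]$ and once to $[t_i,T]$ we have $\|A_i\|\le\bar\sigma\sqrt{t_i}$ and $\|B_i\|\le\bar\sigma\sqrt{T-t_i}$, which gives the stated bound up to the polynomial growth factor $(1+S_0^p)$. The main obstacle is purely combinatorial rather than analytical: no new technique beyond those of Theorems \ref{theorem_h_2} and \ref{theorem_hl_1} is required, but the expansion of the nine double products into their $M$-monomials must be carried out carefully so that the specific coefficients in the statement emerge.
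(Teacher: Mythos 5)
Your proposal is correct and follows essentially the same route as the paper: the same decomposition of $S_T^{(y,\delta)}-K$ around the Lehman proxy into the $A_i=M_T-M_T/M_{t_i}$ and $B_i=1-M_T/M_{t_i}$ pieces, the same splitting of the square into the two pure terms (handled by Theorem \ref{theorem_h_2} and the Etore--Gobet strike expansion) plus the cross term, the same Girsanov tilts, and the same coefficient identity $a_ia_j+2a_ib_j+b_ib_j=\frac{T+t_j-t_i}{T}\hat\delta_i\hat\delta_j$. Your treatment of the remainder, bounding $\|A_i\|\le\bar\sigma\sqrt{t_i}$ and $\|B_i\|\le\bar\sigma\sqrt{T-t_i}$ separately, is in fact slightly more explicit than what the paper writes down.
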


\begin{proof}
Let  $\hat{\delta}^n_i = \frac{T-t_i}{T}\hat{\delta}_i, \hat{\delta}^f_i = \frac{t_i}{T}\hat{\delta}_i $
We start with a Taylor expansion of order-2 on $h$:
\begin{align}
	\mathbb{E}\left[B_T h(S_T^{(y,\delta)} -K)\right] =& \mathbb{E}\left[B_T h(\bar{F}_T - \bar{k})\right]\nonumber\\
	&+\sum_{i=1}^n \hat{\delta}^n_i \mathbb{E}\left[ B_T h'(\bar{F}_T-\bar{k})  \left(M_{T}-\frac{M_{T}}{M_{t_i}}\right)\right]\nonumber\\
	&+\sum_{i=1}^n \hat{\delta}^f_i \mathbb{E}\left[ B_T h'(\bar{F}_T-\bar{k})  \left(1-\frac{M_{T}}{M_{t_i}}\right)\right]\nonumber\\
	&+\frac{1}{2} \mathbb{E}\left[ B_T h''(\bar{F}_T-\bar{k})  \left(\sum_{i=1}^n \hat{\delta}^n_i\left(M_{T}-\frac{M_{T}}{M_{t_i}}\right)+ \hat{\delta}^f_i\left(1-\frac{M_{T}}{M_{t_i}}\right)\right)^2\right]\nonumber\\
	&+\Error_3(h)
\end{align}
where $|\Error_3(h)| \leq (1+S_0^p)\sup_i\left(\delta_i \|\frac{M_T}{M_{t_i}}-M_T \| \right)^3$. 

The first three terms stem from the first-order expansion of Theorem \ref{theorem_hl_1}. The remaining term can be written as:
\begin{align}
	&\mathbb{E}\left[ B_T h''(\bar{F}_T-\bar{k})  \left(\sum_{i=1}^n \hat{\delta}^n_i\left(M_{T}-\frac{M_{T}}{M_{t_i}}\right)+ \hat{\delta}^f_i\left(1-\frac{M_{T}}{M_{t_i}}\right)\right)^2\right]\nonumber\\
	=&\mathbb{E}\left[ B_T h''(\bar{F}_T-\bar{k})  \left(\sum_{i=1}^n \hat{\delta}^n_i\left(M_{T}-\frac{M_{T}}{M_{t_i}}\right)\right)^2\right] + \mathbb{E}\left[ B_T h''(\bar{F}_T-\bar{k})  \left(\sum_{i=1}^n \hat{\delta}^f_i\left(1-\frac{M_{T}}{M_{t_i}}\right)\right)^2\right]\nonumber\\
	&+ 2 \mathbb{E}\left[ B_T h''(\bar{F}_T-\bar{k})  \left(\sum_{i=1}^n \hat{\delta}^n_i\left(M_{T}-\frac{M_{T}}{M_{t_i}}\right)\right) \left(\sum_{j=1}^n \hat{\delta}^f_j\left(1-\frac{M_{T}}{M_{t_j}}\right)\right)\right]
	\end{align}
The first two terms have been computed in the proof of Theorem \ref{theorem_h_2} above and in the proof of Theorem 2.2 of \cite{etore2012stochastic}. The remaining term can be written as:

	\begin{align}
		& \mathbb{E}\left[ B_T h''(\bar{F}_T-\bar{k})  \left(\sum_{i=1}^n \hat{\delta}^n_i\left(M_{T}-\frac{M_{T}}{M_{t_i}}\right)\right) \left(\sum_{j=1}^n \hat{\delta}^f_j\left(1-\frac{M_{T}}{M_{t_j}}\right)\right)\right]\\
	 =& 	\sum_{1\leq i,j \leq n}	\hat{\delta}^n_i \hat{\delta}^f_j \mathbb{E}\left[ B_T h''(\bar{F}_T-\bar{k}) \frac{M_{T}}{M_{t_i}}\frac{M_{T}}{M_{t_j}}\right]\nonumber\\
	&-\left(\sum_{j=1}^n \hat{\delta}^n_j\right) \sum_{i=1}^n \hat{\delta}_i^f \mathbb{E}\left[ B_T h''(\bar{F}_T-\bar{k}) \frac{M_{T}^2}{M_{t_i}}\right]-\left(\sum_{j=1}^n \hat{\delta}^f_j\right) \sum_{i=1}^n \hat{\delta}_i^n \mathbb{E}\left[ B_T h''(\bar{F}_T-\bar{k}) \frac{M_{T}}{M_{t_i}}\right]\nonumber\\
	&+\left(\sum_{i=1}^n \hat{\delta}_i^n\right)\left(\sum_{i=1}^n \hat{\delta}_i^f\right) \mathbb{E}\left[ B_T h''(\bar{F}_T-\bar{k}) M_{T}\right]\textmd{.}
\end{align}

The terms in $\hat{\delta}_i \hat{\delta}_j$ can be assembled together to give:
\begin{align*}
	&\sum_{1\leq i,j \leq n}	\hat{\delta}^n_i \hat{\delta}^n_j \mathbb{E}\left[ B_T h''(\bar{F}_T-\bar{k}) \frac{M_{T}}{M_{t_i}}\frac{M_{T}}{M_{t_j}}\right]	+	\sum_{1\leq i,j \leq n}	\hat{\delta}^f_i \hat{\delta}^f_j \mathbb{E}\left[ B_T h''(\bar{F}_T-\bar{k}) \frac{M_{T}}{M_{t_i}}\frac{M_{T}}{M_{t_j}}\right]\\
	&+
	2\sum_{1\leq i,j \leq n}	\hat{\delta}^n_i \hat{\delta}^f_j \mathbb{E}\left[ B_T h''(\bar{F}_T-\bar{k}) \frac{M_{T}}{M_{t_i}}\frac{M_{T}}{M_{t_j}}\right]\\
	=& \sum_{1\leq i,j \leq n} \left(1 +\frac{t_j-t_i}{T}\right)	\hat{\delta}_i \hat{\delta}_j \mathbb{E}\left[ B_T h''(\bar{F}_T-\bar{k}) \frac{M_{T}}{M_{t_i}}\frac{M_{T}}{M_{t_j}}\right]
\end{align*}
as $\left(1-\frac{t_i}{T}\right) \left(1-\frac{t_j}{T}\right) + \frac{t_j}{T}\frac{t_i}{T}+2\left(1-\frac{t_i}{T}\right)\frac{t_j}{T} = 1 +\frac{t_j}{T}- \frac{t_i}{T}$.

Similarly to Theorem \ref{theorem_h_2}, changes of measure and interchange of expectation and derivatives will lead to the proof.
\end{proof}

Although the formula of Theorem \ref{theorem_hl_2} looks long, it will not be more costly to evaluate than the simpler formula of Theorem \ref{theorem_h_2} when applied to vanilla options. Let $v_t = \sqrt{\int_0^t \sigma^2_s ds}$, applying Theorem \ref{theorem_hl_2} to the vanilla option payoff leads to:
\begin{align}
	\mathbb{E}\left[B_T |\eta S_T^{(y,\delta)} - \eta K|^+\right]
	=& \eta B_T \left[\bar{f} \Phi(\eta d_1) - \bar{k} \Phi(\eta d_2)\right] \nonumber\\
	&-\eta B_T \sum_{i=1}^n	\hat{\delta}_i  \left( \Phi(\eta d_1 -\eta \frac{v_{t_i}^2}{v_T}) - \frac{T-t_i}{T}\Phi(\eta d_1) - \frac{t_i}{T}\Phi(\eta d_2) \right) \nonumber\\
	&+  \frac{1}{2}B_T \sum_{1\leq i,j \leq n}	\hat{\delta}_i \hat{\delta}_j \frac{T+t_j-t_i}{T} \frac{\phi(d_1 + v_T-\frac{v_{t_i}^2+v_{t_j}^2}{v_T})}{\bar{k} v_T} e^{v_T^2 - v_{\max(t_i,t_j)}^2}\nonumber\\
	&- B_T \left(\sum_{j=1}^n \frac{T-t_j}{T}\hat{\delta}_j\right) \sum_{i=1}^n \hat{\delta_i} \frac{\phi(d_1 - \frac{v_{t_i}^2}{v_T})}{\bar{f} v_T}  \nonumber - B_T \left(\sum_{j=1}^n \frac{t_j}{T}\hat{\delta}_j\right) \sum_{i=1}^n \hat{\delta_i} \frac{\phi(d_1 -\frac{v_{t_i}^2}{v_T})}{\bar{k} v_T} \nonumber \\
	&+ \frac{1}{2} B_T \left[ \left(\sum_{j=1}^n \frac{T-t_j}{T}	\hat{\delta}_j\right)^2 \frac{\phi(d_1)}{\bar{f} v_T} + \left(\sum_{j=1}^n \frac{t_j}{T}	\hat{\delta}_j\right)^2 \frac{\phi(d_2)}{\bar{k} v_T} \right]\nonumber\\
		&+ B_T \left(\sum_{j=1}^n \frac{T-t_j}{T}	\hat{\delta}_j\right) \left(\sum_{j=1}^n \frac{t_j}{T}	\hat{\delta}_j\right) \frac{\phi(d_1)}{\bar{k} v_T} \nonumber\\
	&+ \Error_3(h)
\end{align}
where $|\Error_3(h)| \leq c(1+S_0^p)\sup_i\left( \left(\delta_i\sigma\frac{T-t_i}{T} \sqrt{t_i}\right)^3 + \left(  \delta_i\sigma\frac{t_i}{T}\sqrt{T-t_i}\right)^3\right)$, with $d_1 = d_1(\bar{f},\bar{k}), d_2 = d_2(\bar{f},\bar{k})$.

The double sum can be split into two symmetric parts, leading to a nearly 50\% performance gain over a naive implementation when the number of dividends is high as we have for any function $u$:

\begin{align}
&\sum_{1\leq i,j \leq n} (T+t_j-t_i) \hat\delta_i \hat\delta_j  u(v_{t_i}^2+v_{t_j}^2,\max(t_i,t_j)) \nonumber\\
= & \sum_{i=1}^n T \hat\delta_i^2  u(2 v_{t_i}^2,t_i) +  \sum_{i=1}^n   \sum_{j=1}^{i-1}(T+t_j-t_i)  \hat\delta_i \hat\delta_j u(v_{t_i}^2+v_{t_j}^2,t_i) \nonumber\\
&+ \sum_{i=1}^n \sum_{j=i+1}^{n} (T+t_j-t_i) \hat\delta_i \hat\delta_j u(v_{t_i}^2+v_{t_j}^2, t_j)\nonumber\\
= & \sum_{i=1}^n T \hat\delta_i^2  u(2v_{t_i}^2,t_i) +  \sum_{i=1}^n   \sum_{j=i+1}^{n}(T-t_j+t_i)  \hat\delta_i \hat\delta_j u(v_{t_i}^2+v_{t_j}^2, t_j)\nonumber\\
&+ \sum_{i=1}^n \sum_{j=i+1}^{n} (T+t_j-t_i) \hat\delta_i \hat\delta_j u(v_{t_i}^2+v_{t_j}^2,t_j)\nonumber\\
= & \sum_{i=1}^n T \hat\delta_i^2  u(2v_{t_i}^2,t_i) +  2\sum_{i=1}^n   \sum_{j=i+1}^{n}T  \hat\delta_i \hat\delta_j u(v_{t_i}^2+v_{t_j}^2,t_i) \textmd{.}
\end{align}
In terms of computational cost, this expansion is not more costly than the original second-order expansion of \citet{etore2012stochastic}: in a careful implementation, counting in the normal density function evaluations, it involves the same number of exponential functions evaluations.

In theory, we can expect the expansion around the displaced strike to be more accurate when the dividends are near the maturity date of the option: in the extreme case where the dividends happen at maturity, the zero-th order price stemming from the shifted Black formula is exact. Similarly, we can expect the expansion around the forward to be more accurate when the dividends are near the valuation date. In this regard, being a mix of the two, the expansion around the Lehman model should provide a more stable approximation over the full range of dividend dates.

\subsection{Third-order expansion around the Lehman model}
\begin{theorem}\label{theorem_hl_3}
	For a smooth function $h$ satisfying $H_4$, the third-order correction is:
\begin{align}
& \frac{1}{6} \mathbb{E}\left[ B_T h'''(\bar{F}_T-\bar{K})  \left(\sum_{i=1}^n \hat{\delta}^n_i\left(M_{T}-\frac{M_{T}}{M_{t_i}}\right)+\hat{\delta}^f_j\left(1-\frac{M_{T}}{M_{t_j}}\right) \right)^3\right]\\
&=\frac{1}{6}\left(\sum_{j=1}^n \hat{\delta}^f_i \right)^3\partial_K^3 \mathbb{E}\left[ B_T h\left(\bar{F}_T-\bar{K}\right) \right]\nonumber\\
&+\frac{1}{6}\left(\sum_{i=1}^n \hat{\delta}^n_i \right)^3 \partial_K^3 \mathbb{E}\left[ B_T h\left(\bar{F}_T e^{3\int_{0}^T \sigma^2_s ds}-\bar{K}\right) \right]e^{3\int_{0}^T \sigma^2_s ds}\nonumber\\
&-\frac{1}{6}\sum_{1\leq i,j,l \leq n} \hat{\delta}_i \hat{\delta}_j \hat{\delta}_l \partial_K^3 \mathbb{E}\left[ B_T h\left(\bar{F}_T e^{\int_{t_i}^T \sigma^2_s ds +\int_{t_j}^T \sigma^2_s ds + \int_{t_l}^T \sigma^2_s ds}-\bar{K}\right) \right]e^{\int_{\max(t_i,t_j)}^T \sigma^2_s ds + \int_{\max(t_i,t_l)}^T \sigma^2_s ds + \int_{\max(t_j,t_l)}^T \sigma^2_s ds}\nonumber\\
&+\frac{1}{2}\left(\sum_{i=1}^n \hat{\delta}^n_i \right) \left(\sum_{j=1}^n \hat{\delta}^f_j \right)^2 \partial_K^3 \mathbb{E}\left[ B_T h\left(\bar{F}_Te^{\int_{0}^T \sigma^2_s ds}-\bar{K}\right) \right]\nonumber\\
&+\frac{1}{2}\left(\sum_{i=1}^n \hat{\delta}^n_i \right)^2 \left(\sum_{j=1}^n \hat{\delta}^f_j \right) \partial_K^3 \mathbb{E}\left[ B_T h\left(\bar{F}_Te^{2\int_{0}^T \sigma^2_s ds}-\bar{K}\right) \right]e^{\int_{0}^T \sigma^2_s ds}\nonumber\\
&+\frac{1}{2}\left(\sum_{l=1}^n \hat{\delta}^n_l \right) \sum_{1\leq i,j \leq n} \hat{\delta}_i \hat{\delta}_j \partial_K^3 \mathbb{E}\left[ B_T h\left(\bar{F}_T e^{\int_{t_i}^T \sigma^2_s ds +\int_{t_j}^T \sigma^2_s ds + \int_{0}^T \sigma^2_s ds}-\bar{K}\right) \right]e^{\int_{\max(t_i,t_j)}^T \sigma^2_s ds + \int_{t_i}^T \sigma^2_s ds + \int_{t_j}^T \sigma^2_s ds}\nonumber\\
&+\frac{1}{2}\left(\sum_{l=1}^n \hat{\delta}^f_l \right) \sum_{1\leq i,j \leq n} \hat{\delta}_i \hat{\delta}_j \partial_K^3 \mathbb{E}\left[ B_T h\left(\bar{F}_T e^{\int_{t_i}^T \sigma^2_s ds +\int_{t_j}^T \sigma^2_s ds}-\bar{K}\right) \right]e^{\int_{\max(t_i,t_j)}^T \sigma^2_s ds }\nonumber\\
&-\frac{1}{2}\left(\sum_{l=1}^n \hat{\delta}^n_l \right)^2 \sum_{i=1}^n \hat{\delta}_i \partial_K^3 \mathbb{E}\left[ B_T h\left(\bar{F}_T e^{\int_{t_i}^T \sigma^2_s ds +2 \int_{0}^T \sigma^2_s ds}-\bar{K}\right) \right]e^{2\int_{t_i}^T \sigma^2_s ds + \int_{0}^T \sigma^2_s ds}\nonumber\\
&-\frac{1}{2}\left(\sum_{l=1}^n \hat{\delta}^f_l \right)^2 \sum_{i=1}^n \hat{\delta}_i \partial_K^3 \mathbb{E}\left[ B_T h\left(\bar{F}_T e^{\int_{t_i}^T \sigma^2_s ds}-\bar{K}\right) \right]\nonumber\\
&-\left(\sum_{j=1}^n \hat{\delta}^n_j \right) \left(\sum_{l=1}^n \hat{\delta}^f_l \right) \sum_{i=1}^n \hat{\delta}_i \partial_K^3 \mathbb{E}\left[ B_T h\left(\bar{F}_T e^{\int_{t_i}^T \sigma^2_s ds + \int_{0}^T \sigma^2_s ds}-\bar{K}\right) \right]e^{\int_{t_i}^T \sigma^2_s ds}
\end{align}
\end{theorem}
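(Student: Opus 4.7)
The statement is an identity for the third-order Taylor correction alone, not the full expansion, so the task reduces to evaluating the single expectation on its left-hand side. Starting from the decomposition established in the proof of Theorem \ref{theorem_hl_1},
\begin{equation*}
S_T^{(y,\delta)}-K = (\bar F_T-\bar k) + \sum_{i=1}^n \hat{\delta}_i^n\bigl(M_T-\tfrac{M_T}{M_{t_i}}\bigr) + \hat{\delta}_i^f\bigl(1-\tfrac{M_T}{M_{t_i}}\bigr),
\end{equation*}
the cubic term in the Taylor expansion of $h$ around $\bar F_T-\bar k$ is precisely the left-hand side of the theorem, and the rest of the proof is to expand and reduce this one expectation.

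I abbreviate $U = (\sum_i\hat{\delta}_i^n)\, M_T$, $V = \sum_i\hat{\delta}_i^f$, and $W = \sum_i\hat{\delta}_i\, M_T/M_{t_i}$, so that, using $\hat{\delta}_i^n+\hat{\delta}_i^f=\hat{\delta}_i$, the bracketed sum equals $U+V-W$. The cube expansion
\begin{equation*}
(U+V-W)^3 = U^3+V^3-W^3+3U^2V+3UV^2-3U^2W-6UVW-3V^2W+3UW^2+3VW^2
\end{equation*}
produces ten monomials, and unrolling $W^k = \sum_{i_1,\dots,i_k}\hat{\delta}_{i_1}\cdots\hat{\delta}_{i_k}\, M_T^k/(M_{t_{i_1}}\cdots M_{t_{i_k}})$ rewrites each monomial as a constant multiple of an expectation of the form $\mathbb{E}[B_T h'''(\bar F_T-\bar k)\, M_T^a/(M_{t_{i_1}}M_{t_{i_2}}M_{t_{i_3}})]$ for some $a\in\{0,1,2,3\}$. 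Here I adopt the convention that an $M_T$-factor inherited from $U$ fills a slot with virtual index $t=0$ (so $M_0=1$) while a $V$-factor supplies a slot with no $M$-quotient at all.

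Each such expectation is then handled by exactly the change-of-measure recipe already used in the proof of Theorem \ref{theorem_h_3}. The Girsanov density $(M_T/M_{t_{i_1}})(M_T/M_{t_{i_2}})(M_T/M_{t_{i_3}})\exp\bigl(-\int_{\max(t_{i_1},t_{i_2})}^T\sigma_s^2\,ds-\int_{\max(t_{i_1},t_{i_3})}^T\sigma_s^2\,ds-\int_{\max(t_{i_2},t_{i_3})}^T\sigma_s^2\,ds\bigr)$ defines a measure under which $\bar F_T$ has the same law as $\bar F_T\exp\bigl(\int_{t_{i_1}}^T\sigma_s^2\,ds+\int_{t_{i_2}}^T\sigma_s^2\,ds+\int_{t_{i_3}}^T\sigma_s^2\,ds\bigr)$ under $\mathbb{Q}$. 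Interchanging expectation and $\partial_K^3$, justified by the polynomial bounds in $H_4$, converts each $h'''$-integrand into a $\partial_K^3$-derivative of the corresponding rescaled expectation of $h$, with the prefactor $\exp\bigl(\int_{\max(t_{i_1},t_{i_2})}^T\sigma_s^2\,ds+\int_{\max(t_{i_1},t_{i_3})}^T\sigma_s^2\,ds+\int_{\max(t_{i_2},t_{i_3})}^T\sigma_s^2\,ds\bigr)$ left out front. Collecting the ten contributions and matching each to its corresponding line of the statement finishes the proof.

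The calculation is conceptually straightforward; the main obstacle is organisational. Every $M_T$ factor contributed by $U$ injects a virtual $t=0$ index into the $\max$'s of the Radon--Nikodym exponent, while each $V$-factor removes a slot entirely, so a careful convention must be fixed for these degenerate indices and then checked against each of the ten lines, keeping track of the multinomial coefficients $1,3,6$ and of the sign coming from the identity $h'''(\cdot-K) = -\partial_K^3 h(\cdot-K)$.
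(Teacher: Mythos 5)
Your proposal is correct and is essentially the paper's own proof: the same rewriting of the bracket as $U+V-W$ via $\hat{\delta}^n_i+\hat{\delta}^f_i=\hat{\delta}_i$, the same ten-term cube expansion with coefficients $1,3,6$, and the same change-of-measure identity already used for Theorem \ref{theorem_h_3}, with the degenerate slots handled exactly as the paper does (an $M_T$ factor corresponds to a virtual index $t=0$, a $V$-factor to $t=T$, i.e.\ your ``no quotient'' slot). The only caveat is the sign you flag yourself: carrying $h'''(\cdot-K)=-\partial_K^3 h(\cdot-K)$ through consistently would flip every term relative to the statement as printed, whereas the paper's proof applies its displayed identity without that minus (which matches the printed Theorem \ref{theorem_hl_3} but clashes with the sign convention actually used in the statement of Theorem \ref{theorem_h_3}); this is an inconsistency internal to the paper, not a gap in your argument.
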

\begin{proof}
We start by expanding the cubic term as
\begin{align}
&	\left(\sum_{i=1}^n \hat{\delta}^n_i\left(M_{T}-\frac{M_{T}}{M_{t_i}}\right)+\hat{\delta}^f_i\left(1-\frac{M_{T}}{M_{t_j}}\right) \right)^3 = \left(\sum_{i=1}^n \hat{\delta}^n_i M_{T} +\hat{\delta}^f_i - \left( \hat{\delta}^n_i+ \hat{\delta}^f_i\right)\frac{M_{T}}{M_{t_j}}\right)^3\nonumber\\
&=\left(\sum_{i=1}^n \hat{\delta}^n_i \right)^3 M_T^3 + \left(\sum_{i=1}^n \hat{\delta}^f_i \right)^3 - \left(\sum_{i=1}^n \hat{\delta}_i \frac{M_{T}}{M_{t_i}}\right)^3\nonumber \\
&+3\left(\sum_{i=1}^n \hat{\delta}^n_i \right)^2 \left(\sum_{j=1}^n \hat{\delta}^f_j \right) M_T^2 + 3\left(\sum_{i=1}^n \hat{\delta}^n_i \right) \left(\sum_{j=1}^n \hat{\delta}^f_j \right)^2 M_T\nonumber   \\
& + 3 \left(\sum_{l=1}^n \hat{\delta}^n_l \right) \left(\sum_{1\leq i,j \leq n} \hat{\delta}_i \hat{\delta}_j \frac{M_{T}^3}{M_{t_i} M_{t_j}}\right) + 3 \left(\sum_{l=1}^n \hat{\delta}^f_l \right) \left(\sum_{1\leq i,j \leq n} \hat{\delta}_i \hat{\delta}_j \frac{M_{T}^2}{M_{t_i} M_{t_j}}\right)\nonumber\\
& - 3 \left(\sum_{j=1}^n \hat{\delta}^n_j \right)^2 \left(\sum_{i=1}^n \hat{\delta}_i \frac{M_{T}^3}{M_{t_i}}\right) - 3 \left(\sum_{j=1}^n \hat{\delta}^f_j \right)^2 \left(\sum_{i=1}^n \hat{\delta}_i \frac{M_{T}}{M_{t_i}}\right)- 6 \left(\sum_{j=1}^n \hat{\delta}^n_j \right)\left(\sum_{l=1}^n \hat{\delta}^f_l \right)\left(\sum_{i=1}^n \hat{\delta}_i \frac{M_{T}^2}{M_{t_i}}\right)
	\end{align}
We then compute the expectation of each term, using the linearity of the expectation and the identity
\begin{align*}
	 &\mathbb{E}\left[ B_T h'''(F_T-K) \frac{M_{T}^3}{M_{t_i} M_{t_j}M_{t_l}}\right] \\&=\partial_K^3 \mathbb{E}\left[ B_T h\left(F_T e^{\int_{t_i}^T \sigma^2_s ds +\int_{t_j}^T \sigma^2_s ds + \int_{t_l}^T \sigma^2_s ds}-K\right) \right]e^{\int_{\max(t_i,t_j)}^T \sigma^2_s ds + \int_{\max(t_i,t_l)}^T \sigma^2_s ds + \int_{\max(t_j,t_l)}^T \sigma^2_s ds}
\end{align*}
with $t_i = 0$ or $t_j=0$ or $t_i = T$ or $t_j=T$ depending on the expression to evaluate.
\end{proof}

Note that the quadratic term of the second-order expansion could have been written in a similar fashion:
\begin{align}
	&\left(\sum_{i=1}^n \hat{\delta}^n_i\left(M_{T}-\frac{M_{T}}{M_{t_i}}\right)+\hat{\delta}^f_i\left(1-\frac{M_{T}}{M_{t_j}}\right) \right)^2 = \left(\sum_{i=1}^n \hat{\delta}^n_i M_{T} +\hat{\delta}^f_i - \left( \hat{\delta}^n_i+ \hat{\delta}^f_i\right)\frac{M_{T}}{M_{t_j}}\right)^2\nonumber\\
	&=\left(\sum_{i=1}^n \hat{\delta}^n_i \right)^2 M_T^2 + \left(\sum_{i=1}^n \hat{\delta}^f_i \right)^2+ \left(\sum_{i=1}^n \hat{\delta}_i \frac{M_{T}}{M_{t_i}}\right)^2\nonumber\\
	&+2\left(\sum_{i=1}^n \hat{\delta}^n_i \right) \left(\sum_{j=1}^n \hat{\delta}^f_j \right) M_T - 2 \left(\sum_{j=1}^n \hat{\delta}^n_j \right) \left(\sum_{i=1}^n \hat{\delta}_i \frac{M_{T}^2}{M_{t_i}}\right)- 2 \left(\sum_{j=1}^n \hat{\delta}^f_j \right) \left(\sum_{i=1}^n \hat{\delta}_i \frac{M_{T}}{M_{t_i}}\right) \text{.}	
	\end{align}

Numerically, in order to speed up the evaluation, the sum over $i,j,l$ can be split as:
\begin{align}
&\sum_{1\leq i,j,l \leq n} \hat{\delta}_i \hat{\delta}_j \hat{\delta}_l \partial_K^3 \mathbb{E}\left[ B_T h\left(F_T e^{\int_{t_i}^T \sigma^2_s ds +\int_{t_j}^T \sigma^2_s ds + \int_{t_l}^T \sigma^2_s ds}-K\right) \right]e^{\int_{\max(t_i,t_j)}^T \sigma^2_s ds + \int_{\max(t_i,t_l)}^T \sigma^2_s ds + \int_{\max(t_j,t_l)}^T \sigma^2_s ds}\nonumber\\
&=6\sum_{i=1}^n \sum_{j=i+1}^n \sum_{l=j+1}^n \hat{\delta}_i \hat{\delta}_j \hat{\delta}_l \partial_K^3 \mathbb{E}\left[ B_T h\left(F_T e^{\int_{t_i}^T \sigma^2_s ds +\int_{t_j}^T \sigma^2_s ds + \int_{t_l}^T \sigma^2_s ds}-K\right) \right]e^{\int_{t_j}^T \sigma^2_s ds + 2 \int_{t_l}^T \sigma^2_s ds}\nonumber\\
&+3\sum_{i=1}^n \sum_{l=i+1}^n \hat{\delta}_i^2 \hat{\delta}_l \partial_K^3 \mathbb{E}\left[ B_T h\left(F_T e^{2\int_{t_i}^T \sigma^2_s ds + \int_{t_l}^T \sigma^2_s ds}-K\right) \right]e^{\int_{t_i}^T \sigma^2_s ds + 2 \int_{t_l}^T \sigma^2_s ds}\nonumber\\
&+3\sum_{i=1}^n \sum_{j=i+1}^n \hat{\delta}_i \hat{\delta}_j^2 \partial_K^3 \mathbb{E}\left[ B_T h\left(F_T e^{\int_{t_i}^T \sigma^2_s ds + 2\int_{t_j}^T \sigma^2_s ds}-K\right) \right]e^{3\int_{t_j}^T \sigma^2_s ds}\nonumber\\
&+\sum_{i=1}^n \hat{\delta}_i^3 \partial_K^3 \mathbb{E}\left[ B_T h\left(F_T e^{3\int_{t_i}^T \sigma^2_s ds}-K\right) \right]e^{3\int_{t_i}^T \sigma^2_s ds}\text{.}
\end{align}

Let $v_t = \sqrt{\int_0^t \sigma^2_s ds}$, applying Theorem \ref{theorem_hl_3} to the vanilla option payoff leads to the additional third-order correction:
\begin{align}
&\frac{1}{6}\left(\sum_{j=1}^n \hat{\delta}^f_i \right)^3 \frac{\phi(d_2)}{\bar{k}^2 v_T}\left(\frac{d_2}{v_T}-1\right)+\frac{1}{6}\left(\sum_{i=1}^n \hat{\delta}^n_i \right)^3 \frac{\bar{k} \phi(d_2)}{\bar{f}^3 v_T}\left(\frac{d_2}{v_T}+2\right)\nonumber\\
&+\frac{1}{2}\left(\sum_{i=1}^n \hat{\delta}^n_i \right) \left(\sum_{j=1}^n \hat{\delta}^f_j \right)^2 \frac{\phi(d_2)}{\bar{f} \bar{k} v_T}\frac{d_2}{v_T}+\frac{1}{2}\left(\sum_{i=1}^n \hat{\delta}^n_i \right)^2 \left(\sum_{j=1}^n \hat{\delta}^f_j \right)\frac{\phi(d_2)}{\bar{f}^2 v_T}\left(\frac{d_2}{v_T}+1\right) \nonumber\\
&-\frac{1}{6}\sum_{1\leq i,j,l \leq n} \hat{\delta}_i \hat{\delta}_j \hat{\delta}_l \frac{\phi\left(d_2 + \frac{3v_T^2-v_{t_i}^2-v_{t_j}^2-v_{t_l}^2}{v_T}\right)e^{3v_T^2-v_{\max(t_i,t_j)}^2-v_{\max(t_i,t_l)}^2-v_{\max(t_j,t_l)}^2}}{\bar{k}^2 v_T}\left(\frac{d_2 + \frac{3v_T^2-v_{t_i}^2-v_{t_j}^2-v_{t_l}^2}{v_T}}{v_T}-1\right)\nonumber\\
&+\frac{1}{2}\left(\sum_{l=1}^n \hat{\delta}^n_l \right) \sum_{1\leq i,j \leq n} \hat{\delta}_i \hat{\delta}_j \frac{\phi\left(d_2 + \frac{2v_T^2-v_{t_i}^2-v_{t_j}^2}{v_T}\right)e^{v_T^2-v_{\max(t_i,t_j)}^2}}{\bar{k} \bar{f} v_T} \frac{d_2 + \frac{2v_T^2-v_{t_i}^2-v_{t_j}^2}{v_T}}{v_T}\nonumber\\
&+\frac{1}{2}\left(\sum_{l=1}^n \hat{\delta}^f_l \right) \sum_{1\leq i,j \leq n} \hat{\delta}_i \hat{\delta}_j 
\frac{\phi\left(d_2 + \frac{2v_T^2-v_{t_i}^2-v_{t_j}^2}{v_T}\right)e^{v_T^2-v_{\max(t_i,t_j)}^2}}{\bar{k}^2 v_T}\left(\frac{d_2 + \frac{2v_T^2-v_{t_i}^2-v_{t_j}^2}{v_T}}{v_T}-1\right)\nonumber\\
&-\frac{1}{2}\left(\sum_{l=1}^n \hat{\delta}^n_l \right)^2 \sum_{i=1}^n \hat{\delta}_i \frac{\phi\left(d_1 - \frac{v_{t_i}^2}{v_T}\right)}{\bar{f}^2 v_T}\left(\frac{d_1 - \frac{v_{t_i}^2}{v_T}}{v_T}+1\right) - \frac{1}{2}\left(\sum_{l=1}^n \hat{\delta}^f_l \right)^2 \sum_{i=1}^n \hat{\delta}_i \frac{\phi\left(d_1 - \frac{v_{t_i}^2}{v_T}\right)}{\bar{k}^2 v_T}\left(\frac{d_1 - \frac{v_{t_i}^2}{v_T}}{v_T}-1\right)\nonumber \\
&-\left(\sum_{j=1}^n \hat{\delta}^n_j \right) \left(\sum_{l=1}^n \hat{\delta}^f_l \right) \sum_{i=1}^n \hat{\delta}_i \frac{\phi\left(d_1 - \frac{v_{t_i}^2}{v_T}\right)}{\bar{k} \bar{f} v_T}\frac{d_1 - \frac{v_{t_i}^2}{v_T}}{v_T}
\end{align}
with $d_1 = d_1(\bar{f},\bar{k}), d_2 = d_2(\bar{f},\bar{k})$.

\section{Numerical experiments}\label{sec:expansion_numerical}

\subsection{Single dividend}
We consider vanilla call options of maturity $T=1.0$ on an asset of spot $S(0)=100$, volatility $\sigma=30\%$, with no drift $r_R = 0\%$ and look at the error in implied Black volatility for a range of strikes when a single dividend of amount $\delta_1 = 7$ is paid at various dates $t_1=\{0.1, 0.5, 0.9\}$. EG-1, EG-2 denote the first and second-order expansions on the displaced strike from \citet{etore2012stochastic}, LF-1, LF-2 denote the first and second-order expansions on the forward, LL-1, LL-2 denote the first and second-order expansions on the Lehman model and HHL is the price obtained via the method of \cite{haug2003back}, which is exact in the case of a single dividend and will serve as a reference.

\begin{figure}[h!]
	\begin{center}  
		\subfigure[\label{fig:pln_lf_1div_0_1_1} $t_1=0.1$]{
			\includegraphics[width=6.5cm]{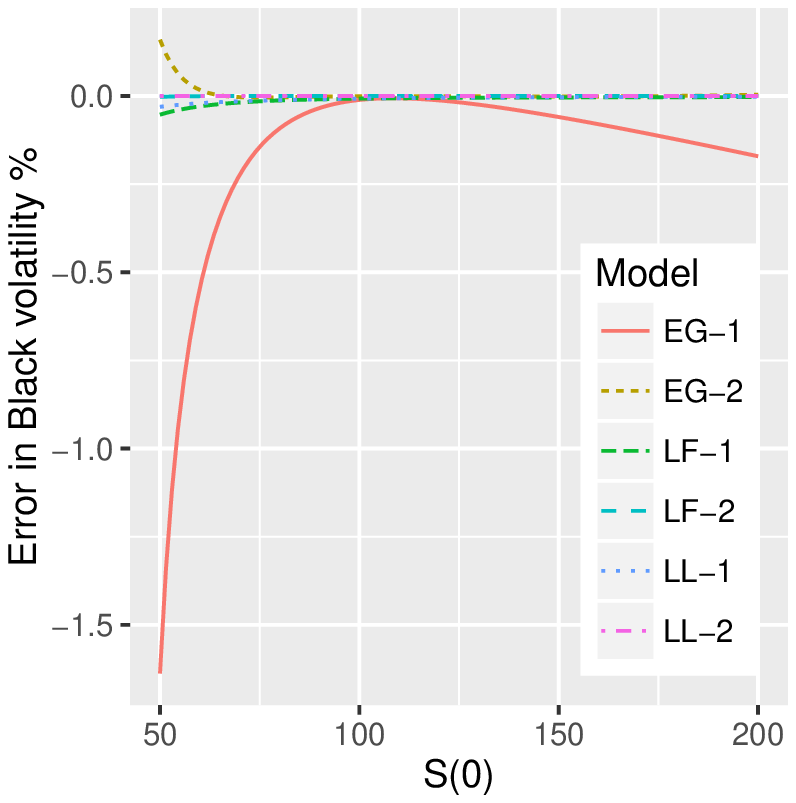}}
		\subfigure[\label{fig:pln_lf_1div_0_1_1e}$t_1=0.1$ removing EG-1]{
			\includegraphics[width=6.5cm]{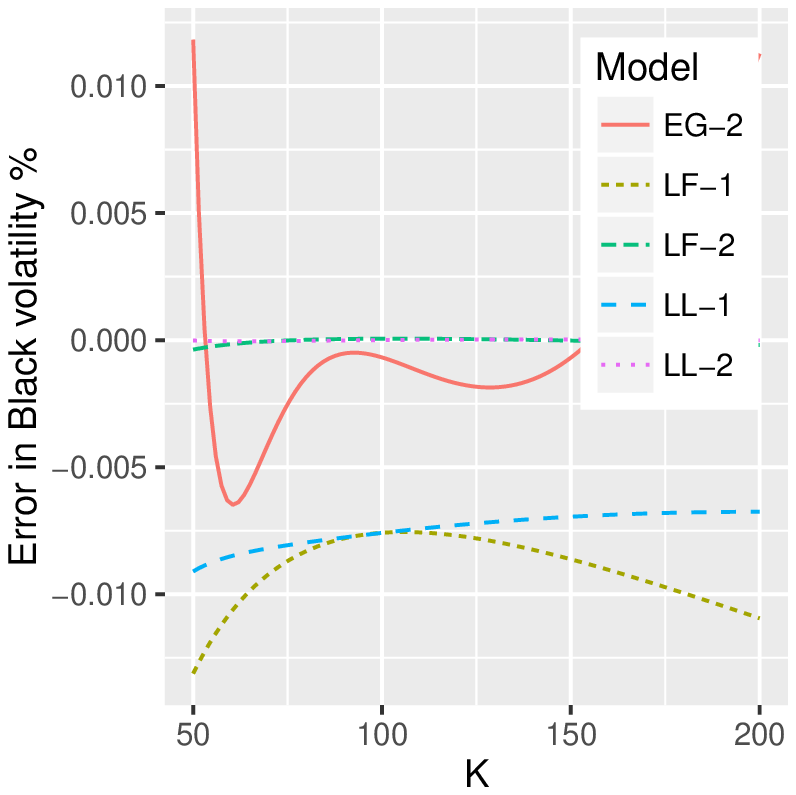}}
	\end{center}
	\caption{Error in implied volatility of first and second-order expansions with $t_1=0.1$.\label{fig:pln_lf_1div_0_1_1a}}
\end{figure}

As expected, the expansions on the forward LF-1 and LF-2 are much more accurate when the dividend ex-date is close to the valuation date. For low strikes, LF-1 becomes more accurate than the second-order expansion on the displaced strike EG-2 (Figures \ref{fig:pln_lf_1div_0_1_1} and \ref{fig:pln_lf_1div_0_1_1e}).

When the dividend is paid at $t_1=0.5$, the expansions on the forward and on the displaced strike have very similar error accross the whole range of strikes (Figure \ref{fig:pln_lf_1div_0_5_1}). When the dividend is paid near maturity, the situation is reversed, and the expansions on the displaced strike are more accurate than the expansions on the forward (Figure\ref{fig:pln_lf_1div_0_9_1}).
\begin{figure}[h]
	\begin{center}  
		\subfigure[\label{fig:pln_lf_1div_0_5_1} $t_1=0.5$]{
			\includegraphics[width=6.5cm]{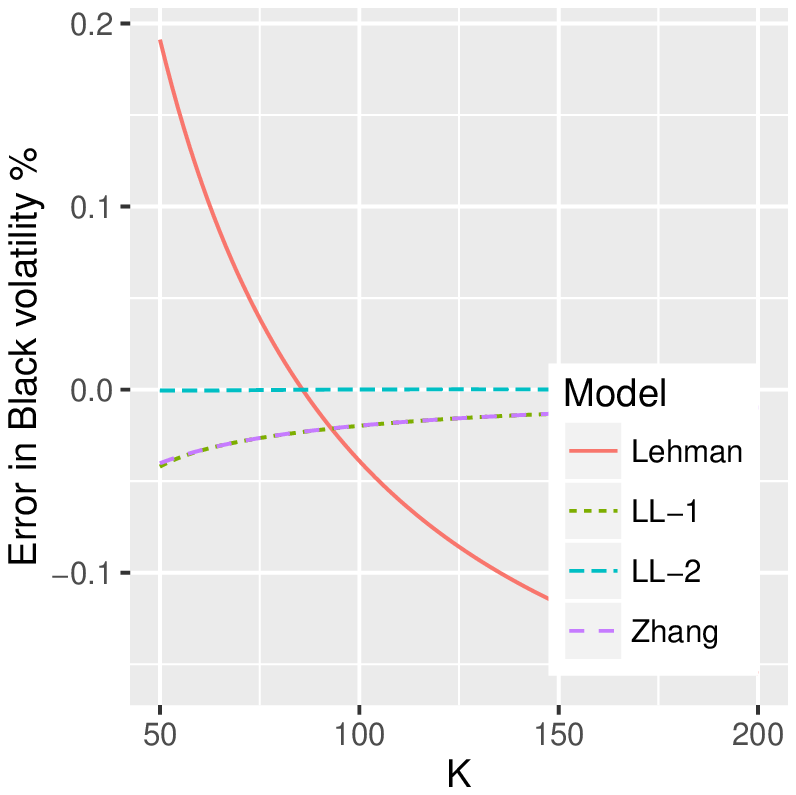}}
		\subfigure[\label{fig:pln_lf_1div_0_9_1}$t_1=0.9$ ]{
			\includegraphics[width=6.5cm]{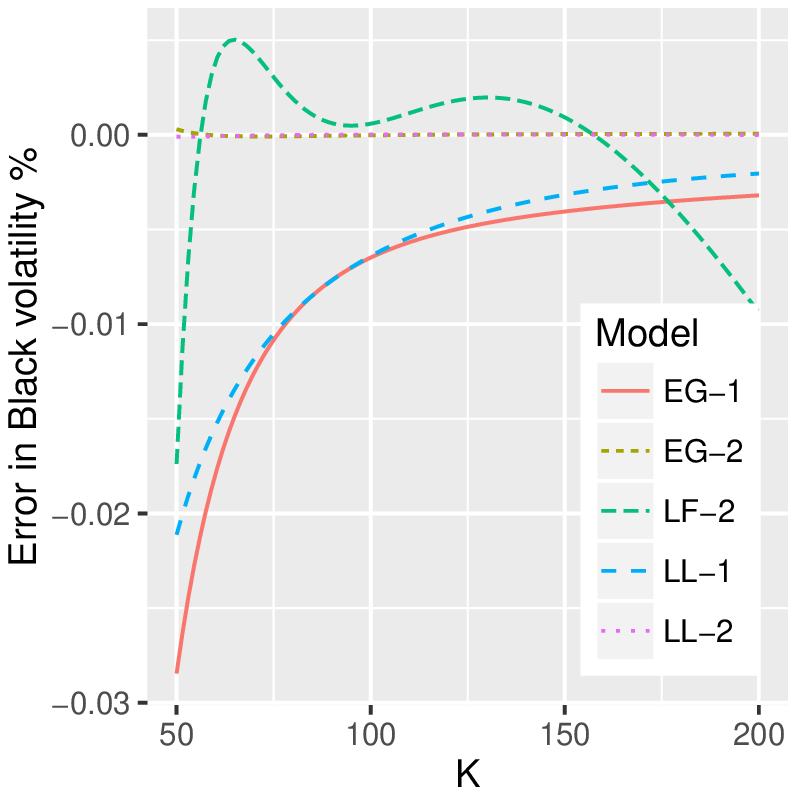}}
	\end{center}
	\caption{Error in implied volatility of first and second-order expansions with various dividend ex-dates.}
\end{figure}

Overall, the expansion using the Lehman model as a proxy is the most accurate, independently whether the strike is far from the asset price or whether the dividend is close to the valuation date or to the maturity date. It is behaves like the best of both forward and strike approximations.
 
In Table \ref{tbl:single}, we compare our expansions to other good approximations, GS denotes the one from \citet{sahel2011matching} and Zhang-1, Zhang-2 denote the first and second-order expansions from \citet{zhang2011fast}.
 \begin{table}[h]
 	\begin{center}
 		\caption{\label{tbl:single_01}Price and absolute error in \% volatility for a single dividend at $t_1 = 0.1$.}
 		\begin{tabular}{c|c|c|c} \hline
 		Strike & 50 & 100 & 150 \\	\hline		
 		HHL    & 43.15079716 ( 0.00e-00)   & 8.42464720 ( 0.00e-00)  & 0.85950498 ( 0.00e-00)  \\
 		Black  & 43.14297574 (-2.43e-01) & 8.33854820 (-2.33e-01)& 0.82943120 (-2.28e-01)  \\
 		Lehman  & 43.15318035 ( 7.24e-02) & 8.41964102 (-1.35e-02)& 0.85328295 (-4.67e-02) \\
 		EG-1    & 43.13065333 (-6.43e-01)& 8.42049085 (-1.12e-02)& 0.84485174 (-1.10e-01) \\
 		LF-1    & 43.15036784 (-1.31e-02)& 8.42184565 (-7.58e-03)& 0.85835513 (-8.62e-03) \\
 		LL-1    & 43.15049919 (-9.10e-03)& 8.42184420 (-7.58e-03)& 0.85857820 (-6.95e-03) \\
 		Zhang   & 43.15050837 (-8.82e-03)& 8.42184920 (-7.57e-03)& 0.85858096 (-6.93e-03) \\
 		EG-2    &43.15118483 ( 1.18e-02)  & 8.42439623 (-6.79e-04)& 0.85941296 (-6.90e-04) \\
 		LF-2    &43.15078502 (-3.71e-04) & 8.42466824 ( 5.69e-05) & 0.85950269 (-1.72e-05) \\
 		LL-2    &43.15079673 (-1.32e-05) & 8.42464789 ( 1.87e-06) & 0.85950819 ( 2.40e-05)  \\
 		Zhang-2 &43.15079600 (-3.56e-05) & 8.42464700 (-5.43e-07)& 0.85950644 ( 1.09e-05)   \\
 		GS      &43.15075570 (-1.27e-03) & 8.42431687 (-8.94e-04)& 0.85940829 (-7.25e-04)  \\
 		EG-3    &43.15089372 ( 2.95e-03)  & 8.42463285 (-3.88e-05)& 0.85952103 ( 1.20e-04)  \\
 		LF-3    &43.15079684 (-9.72e-06) & 8.42464741 ( 5.58e-07) & 0.85950503 ( 3.78e-07)  \\
 		LL-3    &43.15079693 (-6.99e-06) & 8.42464758 ( 1.02e-06) & 0.85950495 (-2.38e-07)  \\ \hline		
 			\end{tabular}
 		\end{center}
 	\end{table}	
 		\begin{table}[h]
 			\begin{center}
 				\caption{\label{tbl:single}Price and absolute error in \% volatility for a single dividend at $t_1 = 0.9$.}
 				\begin{tabular}{c|c|c|c} \hline
 					Strike & 50 & 100 & 150 \\	\hline		
 	HHL    &43.24845580 ( 0.00e-00) &9.07480014 ( 0.00e-00)&1.06252881 ( 0.00e-00) \\
 	Black  &43.14297574 (-2.77e+00) &8.33854820 (-1.99e+00)&0.82943120 (-1.67e+00) \\
 	Lehman &43.25128936 ( 6.32e-02) &9.06953193 (-1.42e-02)&1.05680170 (-3.85e-02) \\
 	EG-1   &43.24718664 (-2.85e-02) &9.07239859 (-6.49e-03)&1.06192565 (-4.05e-03) \\
 	LF-1   &43.22988026 (-4.26e-01) &9.07127538 (-9.52e-03)&1.04735416 (-1.02e-01) \\
 	LL-1   &43.24751323 (-2.11e-02) &9.07242807 (-6.41e-03)&1.06205749 (-3.17e-03) \\
 	Zhang  &43.24752144 (-2.09e-02) &9.07243235 (-6.40e-03)&1.06205987 (-3.15e-03) \\
 	EG-2   &43.24846889 ( 2.93e-04) &9.07479039 (-2.63e-05)&1.06253441 ( 3.76e-05) \\
 	LF-2   &43.24767995 (-1.74e-02) &9.07501725 ( 5.87e-04)&1.06266611 ( 9.22e-04) \\
 	LL-2   &43.24845113 (-1.04e-04) &9.07480497 ( 1.31e-05)&1.06253007 ( 8.45e-06) \\
 	Zhang-2&43.24845254 (-7.30e-05) &9.07480191 ( 4.79e-06)&1.06252951 ( 4.71e-06) \\
 	GS     &43.24865150 ( 4.38e-03) &9.07509909 ( 8.08e-04)&1.06257287 ( 2.96e-04) \\
 	EG-3   &43.24845582 ( 4.35e-07) &9.07480026 ( 3.43e-07)&1.06252875 (-3.84e-07) \\
 	LF-3   &43.24856118 ( 2.36e-03) &9.07478707 (-3.53e-05)&1.06254437 ( 1.05e-04) \\
 	LL-3   &43.24845549 (-6.92e-06) &9.07480027 ( 3.70e-07)&1.06252874 (-4.94e-07) \\ \hline
  		\end{tabular}
 	\end{center}
 \end{table}
The most accurate overall is the second-order approximation from Zhang, followed very closely by the second-order expansion on the Lehman model LL-2. The GS approximation is still very accurate, but its behaviour with a short dated dividend in low strikes is noticeably worse. The LL-2 expansion is a significant improvement over both the expansion around the displaced strike (EG-2) and the expansion around the forward (LF-2).

The third-order expansions increase the accuracy significantly over the second-order counterparts. Figure \ref{fig:pln_lf_1div_0_5_1_o3} shows that the expansions on the forward LF-3 or on the strike EG-3 are not necessarily more accurate than good second-order expansions Zhang-2 or LL-2 for low strikes. The third-order LL-3 is however much more accurate over the full range of strikes, with a maximum error of less than 0.0002 volatility point.
\begin{figure}[h]
	\begin{center}  	
	\includegraphics[width=15cm]{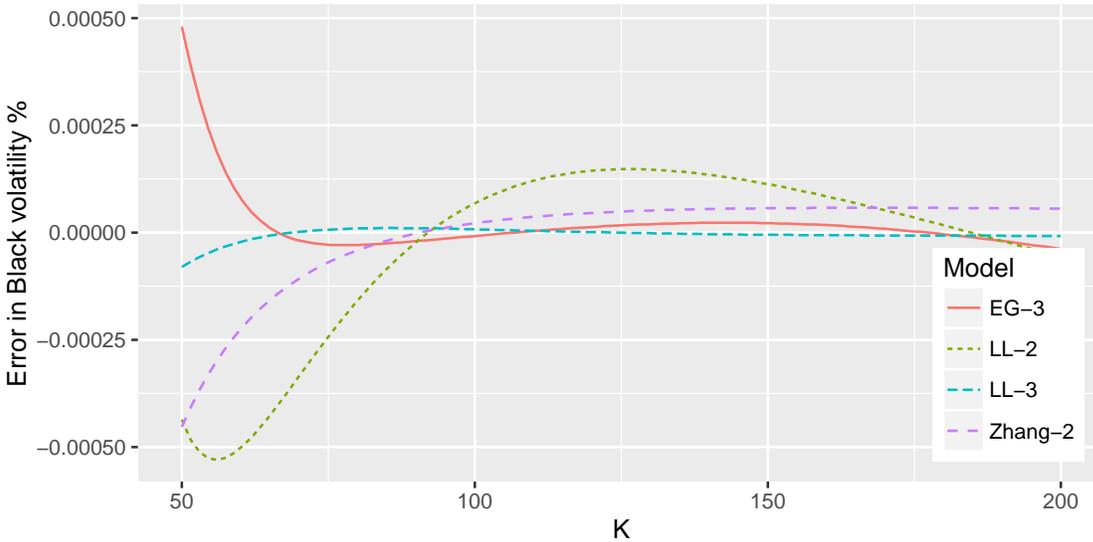}
	\caption{Error in implied volatility of third-order expansions with a single dividend at $t_1=0.5$.\label{fig:pln_lf_1div_0_5_1_o3}}
	\end{center}
\end{figure}
For practical applications however, the second-order expansions LL-2 is sufficiently accurate as its maximum error in implied volatility is around 0.001 volatility point for a dividend at $t_1=0.5$.

\subsection{Many dividends}
We reuse here the example of \citet{sahel2011matching} and consider the underlying asset price $S(0)=100$, the volatility $\sigma=25\%$, the discount and repo rate $r=3\%$, the maturity $T=10$ and dividends of amount $\delta_i=2$ at dates $t_i = 0.5i+ \frac{1}{365}$. The first dividend happens just one day after the valuation date.

On this example, the HHL method from \citet{haug2003back} is not exact anymore as we use their shifted lognormal approximation to work around the multiple integral problem. In fact, in more extreme scenarios, the HHL method can lead to implausible prices: prices that can not be inverted via the Black formula or prices below the intrinsic value. 

The reference value is obtained through a finite difference method, with a fine enough grid. We will ignore the simpler expansions LF-1, EG-1 as they are not practical even with a single dividend.

The HHL method is less accurate than our first-order expansion (Figure \ref{fig:pln_20div_gocsei_order1}) and first-order Zhang approximation has the same error as our first-order expansion LL-1. Overall, the second-order methods of \citet{zhang2011fast} and \citet{sahel2011matching} are the most accurate. Our second-order expansion LL-2 is nearly as good, and improves greatly over the second-order expansion on the displaced strike.
\begin{figure}[htb]
	\begin{center}  
		\subfigure[\label{fig:pln_20div_gocsei_order1} first-order methods]{
			\includegraphics[width=7cm]{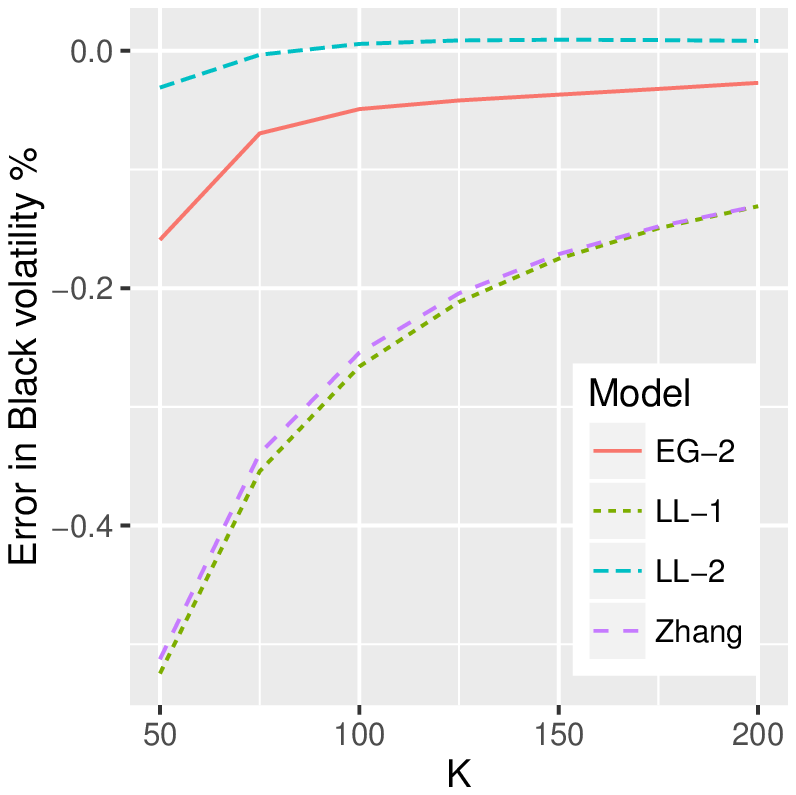}}
		\subfigure[\label{fig:pln_20div_gocsei} second-order methods ]{
			\includegraphics[width=7cm]{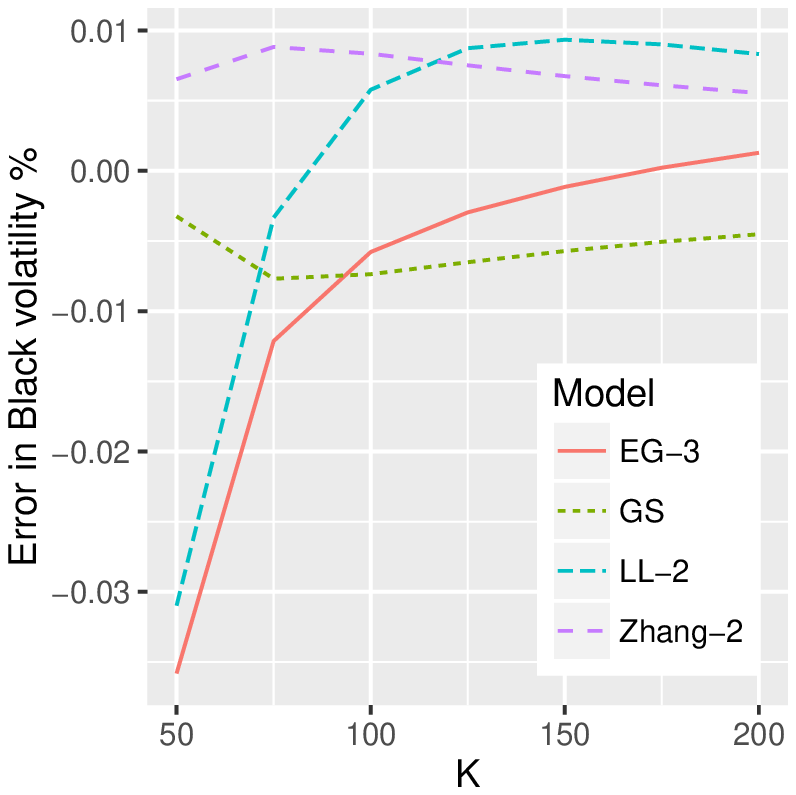}}
		\subfigure[\label{fig:pln_20div_gocsei_order3} third-order against second-order ]{
			\includegraphics[width=14cm]{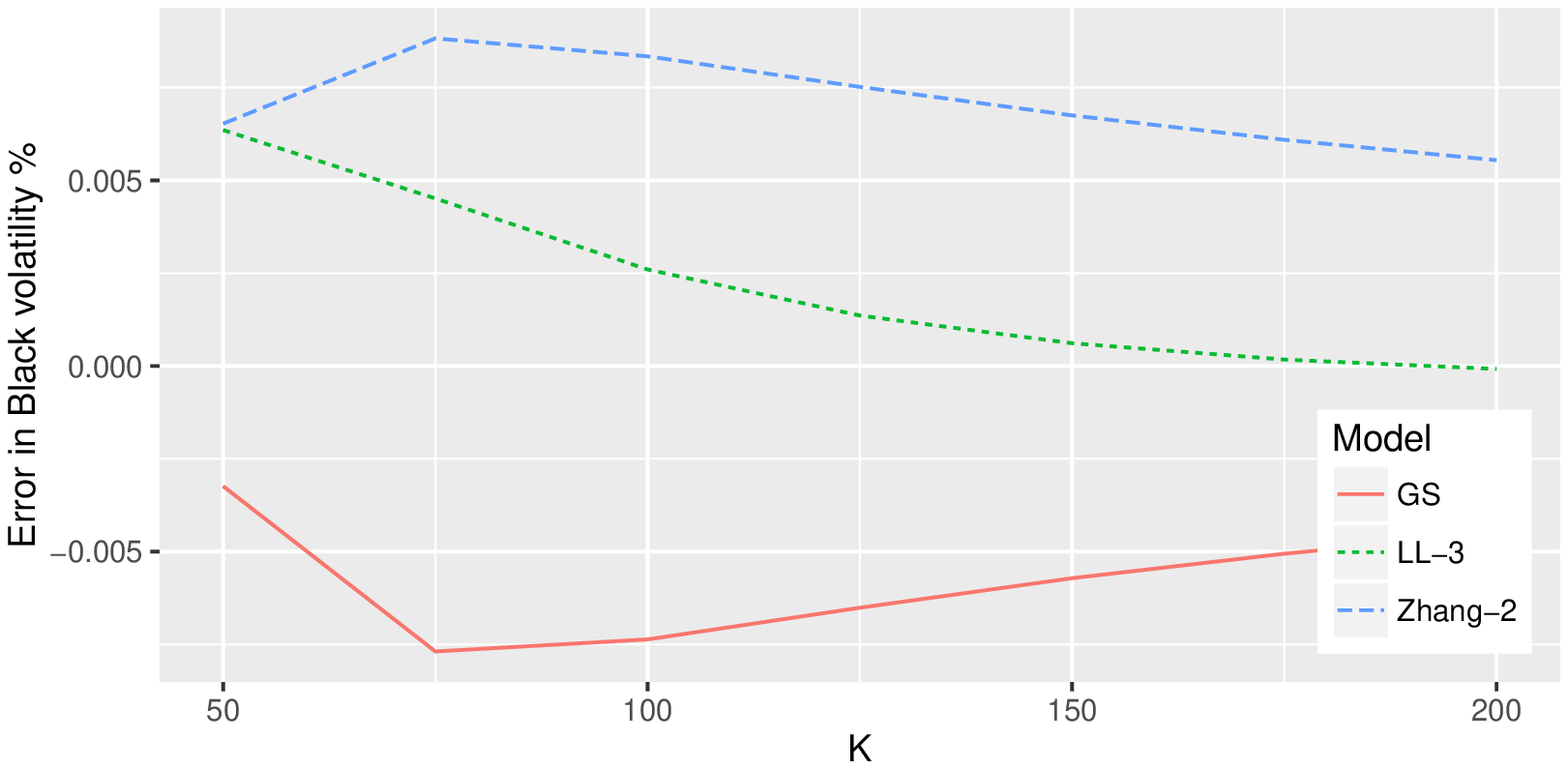}}
				\caption{Error in implied volatility on Gocsei et al. semi-annual dividend case of ten years maturity.}
	\end{center}
\end{figure}
The third-order expansion on the displaced strike EG-3 does not really improve over LL-2 or Zhang-2 expansions on this example. But the third-order expansion on the Lehman model LL-3 is significantly more accurate.

On another example of \citet{sahel2011matching} where they consider a long term option on an index paying weekly dividends with $S(0)=3000, r=3\%, \delta_i= 2, T=20$, the GS approximation is noticeably worse than Zhang-2 for low strikes, in line with the behaviour we noticed in the single dividend case (Figure \ref{fig:pln_20y_gocsei_o3}). This is also an example where the HHL method gives a wrong price for low strikes.
\begin{figure}[htb]
	\begin{center}  
		\subfigure[Error in implied volatility of an option of maturity 20 years with weekly dividends.\label{fig:pln_20y_gocsei_o3}]{
			\includegraphics[width=7cm]{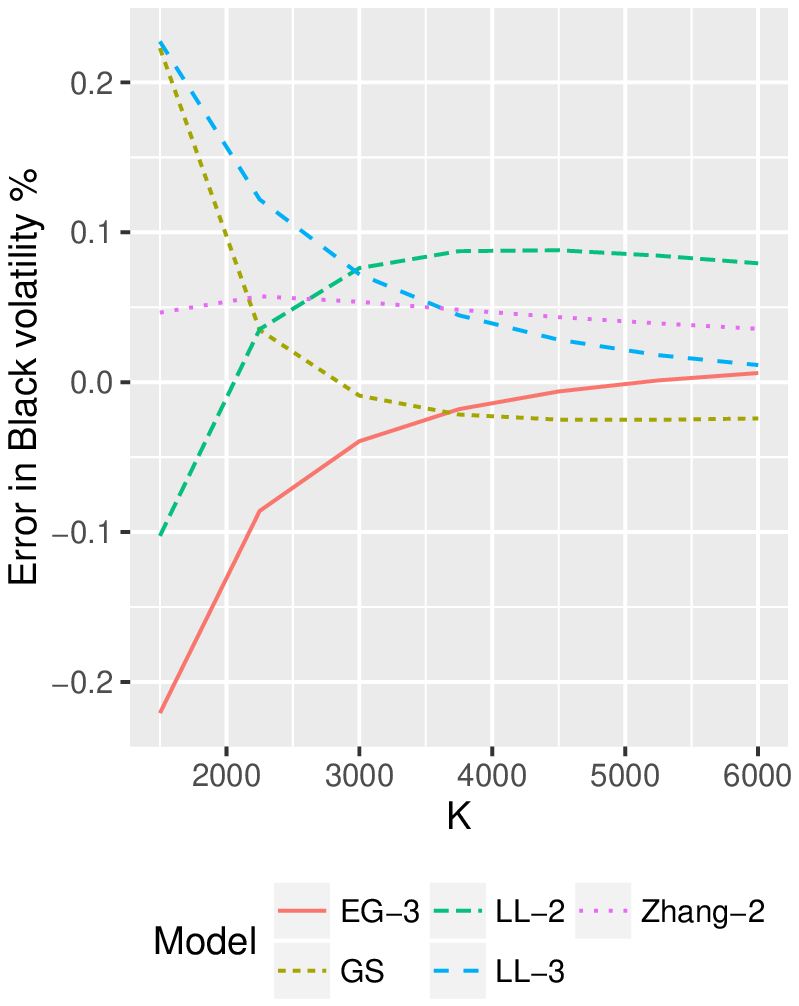}}
		\subfigure[Error in price on the extreme example of an option of maturity 1 year with strike $K=50$ and  semi annual dividends $\delta_i=25$ with $\sigma=80\%$.\label{fig:pln_zhang_extreme} ]{
			\includegraphics[width=7cm]{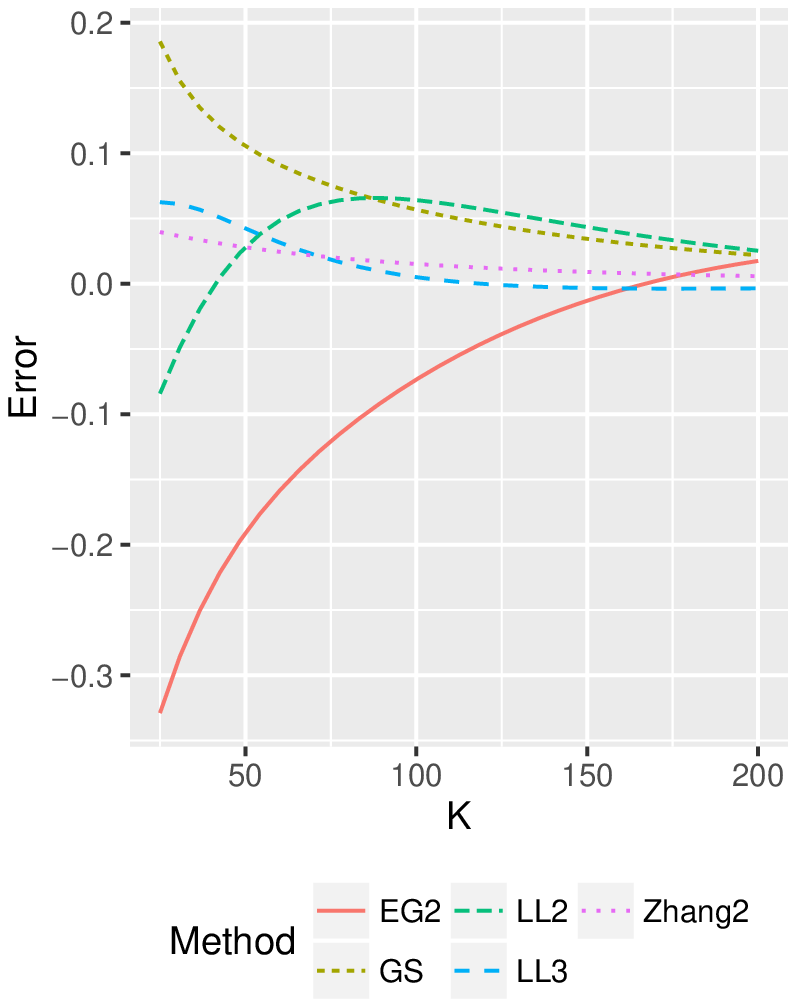}}
		\caption{Challenging examples for third-order methods.}
	\end{center}
\end{figure}
The third-order expansion LL-3 does not really improve on the second-order expansion LL-2. Note that this is not a particularly practical example, as, in reality, long term dividends, typically beyond four year, will be modelled as proportional dividends and not as cash dividends. Interestingly, if we consider a maturity of four years instead of twenty years, with otherwise the same asset and the same weekly dividends, LL-3 is then much more accurate, very much like the example of Figure \ref{fig:pln_20div_gocsei_order3}.

It is interesting to look at a more extreme example like the one from \citet{zhang2011fast}. The author considers an option of maturity one year and strike $K=50$ on an asset with price $S(0)=100$ and very large semi annual dividends $\delta_i=25$ at dates $t_0=0.3$ and $t_1=0.7$ with a very high volatility $\sigma=80\%$. On this example, the second-order method of Zhang has the best overall accuracy. Our third-order expansion on the Lehman model, while more accurate than the expansion on the displaced strike EG-3 is only as accurate as Zhang, while the method from \citet{sahel2011matching} results in an error more than three times larger (Figure \ref{fig:pln_zhang_extreme}).

In terms of performance, the second-order methods are relatively similar  as they all involve $O(n^2)$ exponential evaluations (Table \ref{tbl:zhang_gobet_perf}). Zhang-2 is the slowest as it involves some numerical solving, even if two to three steps of Halley's method are generally enough.

\begin{table}[ht]
	\caption{\label{tbl:zhang_gobet_perf}Time in seconds taken to price 1000 European options. FDM denotes a finite difference method discretisation of the spot PDE with 500 space steps and 100 time steps, with additional time-steps at the dividend dates.}
	\begin{center}
	\begin{tabular}{c r r r }\hline
		Number of dividends & 1 & 10 & 100 \\ \hline
		Zhang-1 & 0.0010 & 0.0034 & 0.038 \\
		Zhang-2 & 0.0016 & 0.0100 & 0.219\\
		EG-2 & 0.0006 & 0.0046 & 0.255\\
		LL-2 & 0.0006 & 0.0056 & 0.260 \\
		GS & 0.0007 & 0.0053 & 0.220 \\
		EG-3 & 0.0007 &  0.0151 & 7.240 \\
		LL-3 & 0.0007 &  0.0152 & 7.360 \\
		HHL & 0.0902 & 0.918 & 9.090 \\
		FDM & 2.4400 & 3.340 & 9.910\\ \hline 
	\end{tabular}
	\end{center}
\end{table}
With a single dividend, the second-order approximations can be 5000 times faster than a finite difference method, with 100 dividends, they are still around forty times faster than a finite difference method. With around 1000 dividends they become of similar performance. For third-order approximations, the threshold is around 100 dividends. With 1000 dividends, they become not so practical as it takes then around 7 seconds to price one option.

\section{Conclusion}
Among the first-order expansions, our expansion with the Lehman proxy is as robust as Zhang first-order expansion, but purely analytical, while the latter expansion requires a (simple) numerical solving.

Among the second-order methods, we found the Zhang method to be the most accurate across a wide range of strikes, dividend dates and dividend amounts. Our second-order expansion using the Lehman model as a proxy is as robust as Zhang second-order formula across strikes and expiries, even with unrealistically large dividends, but it has a larger error. It is however possible to compute the Greeks through analytical formulas even in the second and third-order expansions, and the formulas for the Greeks are not more complex than the formula for the price. The second-order Zhang method does not allow this as easily.

Finally, our third-order expansion using the Lehman model as proxy is the most accurate in practical use cases. Its accuracy is not as good for very long term options where dividends would be modelled as cash. Furthermore, it will be significantly slower for a very large number of dividends. In reality however, the practice is to model long term dividends as proportional and those drawbacks may not be relevant.

A classical approach to price American option under the piecewise-lognormal model consists in applying a finite difference method to the corresponding partial differential equation. The accurate and fast expansions for European options prices under the piecewise-lognormal model allow for an interesting alternative method to price American options. Indeed, 
 \citet{vellekoop2011integral} show that exercise boundary of an American option, and thus its price, may be computed by solving a non-linear integral equation. The one-dimensional integral involves the cumulative probability density of the piecewise-lognormal model, which is generally unknown, but may easily be approximated using the expansions presented in this paper.

\bibliographystyle{rAMF}
\bibliography{expansion_cash_dividend}

\end{document}